\colorlet{darkred}{red!50!black!100}
\newcommand\BVH[1]{{\color{green}BVH}: {\color{green!50!black!100}\emph{#1}}}
\newcommand\red[1]{{#1}}
\title{Studying the Impact of Service Discipline on Redundancy via Mean Field and Pair Approximations}
\title{The Impact of Service Discipline on Redundancy}
\title{Approximations to Study the Impact of the Service Discipline in Systems with Redundancy}
\author{Nicolas Gast}
\email{nicolas.gast@inria.fr}
\affiliation{
  \institution{Univ. Grenoble Alpes, Inria}
  \city{Grenoble}
  \country{France}
}
\author{Benny Van Houdt}
\email{benny.vanhoudt@uantwerpen.be}
\affiliation{
    \institution{Dept. Computer Science, University of Antwerp}
    \city{Antwerp}
    \country{Belgium}
}
\date{April 2023}
\newtheorem{theorem}{Theorem}
\keywords{load balancing; redundancy; scheduling; queueing theory; mean field approximation; pair approximation}
\begin{document}

\begin{abstract}
As job redundancy has been recognized as an effective means to improve performance of large-scale computer systems, queueing systems with redundancy have been studied by various authors. Existing results include methods to compute the queue length distribution and response time but only when the service discipline is First-Come-First-Served (FCFS). For other service disciplines, such as Processor Sharing (PS), or Last-Come-First-Served (LCFS), only the stability conditions are known. 

In this paper we develop the first methods to approximate the queue length distribution in a queueing system with redundancy under various service disciplines. We focus on a system with exponential job sizes, \emph{i.i.d.} copies, and a large number of servers. We first derive a mean field approximation that is independent of the scheduling policy. In order to study the impact of service discipline, we then derive refinements of this approximation to specific scheduling policies. In the case of Processor Sharing, we provide a {\it pair} and a {\it triplet} approximation.   The pair approximation can be regarded as a refinement of the classic mean field approximation and takes the service discipline into account, while the triplet approximation further refines the pair approximation.  We also develop a pair approximation for three other service disciplines: First-Come-First-Served, Limited Processor Sharing and Last-Come-First-Served. We present numerical evidence that shows that all the approximations presented in the paper are highly accurate, but that none of them are asymptotically exact (as the number of servers goes to infinity). This makes these approximations suitable to study the impact of the service discipline on the queue length distribution. Our results show that FCFS yields the shortest queue length, and that the differences are more substantial at higher loads.
\end{abstract}

\maketitle

\section{Introduction}
Job redundancy has been shown to improve performance in a variety of practical systems \cite{dolly,dean2013,vulimiri2013}.
The main idea is to assign replicas of incoming jobs to multiple servers in the hope that one of the
replicas completes quickly in one of the servers. This has lead to the introduction of several queueing models with redundancy. These models can be classified in a number of ways. A first classification can be made based on the assumptions made on the size of the replicas. In some models replicas are
assumed to have independent and identically distributed (i.i.d.) job sizes \cite{gardner2015reducing,gardner2017redundancy,anton2021stability,BONALD201770}, while other models assume that replicas have identical sizes \cite{anton2021stability,raaijmakers_PS,Hellemans_indentical}. An intermediate model is the S\&X model \cite{gardner2017better}, which assumes that the time required to complete a job on a server is a mixture of a fixed job size and an i.i.d. slowdown of the server.

Another classification can be made based on when the replicas are cancelled. A first option is to
cancel all replicas as soon as one completes \cite{gardner2015reducing,gardner2017redundancy,Hellemans_indentical,anton2021stability,raaijmakers_PS,BONALD201770}, this model is called the cancel-on-complete (CoC) model. For servers using First-Come-First-Served (FCFS) replicas may also be cancelled as soon as one replica starts service \cite{hellemansSIG18,ayesta_unifying}, in which case the model is referred to as the Cancel-on-Start (CoS) model. Other variations such as delayed CoC have also been considered \cite{Hellemans_workload}.
While the stability condition of the CoS model typically demands that the load is bounded by one, 
determining the stability condition of a CoC model is much more challenging \cite{gardner2017redundancy,raaijmakers2020,raaijmakers_PS,anton2021stability}, see
\cite{anton_survey} for a recent survey. 

Two main approaches have been used to derive performance measures of queueing models 
with redundancy. When a finite number of servers is considered \cite{gardner2015reducing,gardner2017redundancy,BONALD201770}, product form
solutions for the queue length distribution under FCFS have been obtained in a number of settings  as these models can be linked either with the order-independent queues in \cite{Krzesinski2011} or with the framework presented in \cite{visschers2012product}. Another approach to study queueing models
with redundancy exists in developing approximations for systems
with a large number of servers (typically a homogeneous system) by using mean field models \cite{gardner2017redundancy,Hellemans_workload,hellemansSIG18,Hellemans_indentical, shneer2020large}.
In this case the workload process is studied, which yields approximation results for
the FCFS response time distribution in a large system. In some particular cases these
approximations are proven to be asymptotically exact \cite{shneer2020large}.

Apart from the line of work focusing on the stability of CoC models, surveyed in \cite{anton_survey}, all prior queueing models with redundancy focused on the performance of FCFS servers. In this paper {\bf we are the first to study the impact of the service discipline on 
the queue length distribution in a 
queueing system with redundancy}. The service disciplines considered are Processor Sharing (\textbf{PS}),
\textbf{FCFS}, Last-Come-First-Served (\textbf{LCFS}) and Limited Processor Sharing $K$ (\textbf{LPS($K$)}), where up to $K$ of the oldest jobs are served in parallel. We consider a CoC queueing model\footnote{In Appendix \ref{apx:JIQ-d}) we further indicate that the model that we study is also relevant
in the context of Join-Idle-Queue load balancing \cite{lu1} with redundant tokens. } with $n$ servers, exponential job sizes and i.i.d. replicas where jobs arrive according to a Poisson process with rate $\lambda n$ and each new incoming job is assigned to $d$ random servers. Throughout the paper we mainly
consider  the case with $d=2$, but most of the presented models 
generalize to the setting with $d > 2$ (see Appendix \ref{apx:d>2}). 
We focus on developing accurate approximations for systems with many servers.

The contributions made in the paper can be summarized as follows. First we introduce a mean field
model  that is not affected by the service discipline and that can be used to approximate the queue length distribution in a server. This model basically assumes that queues that hold replicas of the same
job have independent queue lengths. While this is true when the replicas are placed in the servers,
this assumption becomes false afterwards. This is intuitively clear, if a replica of a job has spent a long time period in a long queue, the other replica did not complete service during that period and
is therefore more likely to be part of a long queue as well.

Second, we introduce an exact dynamic graph model when the service discipline is PS. However solving
this graph model seems to be intractable. We therefore introduce a so-called {\it pair approximation}
that relies on the assumption that the graph model becomes a local tree when the number of
nodes in the graph becomes large and that there exists a conditional independence between the degree
of two nodes that are both connected to the same node with a given degree.
We show  by simulation that this pair approximation is highly accurate for systems with many servers. 
We provide numerical evidence that shows that the pair approximation is not asymptotically exact as the number of servers goes to infinity. 

Our third contribution is the introduction of  a
further refinement of the pair approximation for PS servers, called the {\it triplet approximation} that uses a more involved conditional
independence assumption on the degree of the  nodes in the graph. Simulation shows that
this approximation is even more accurate than the pair approximation. Despite its remarkable
accuracy the triplet approximation is also asymptotically inexact and coming up with an approximation
that captures the exact limiting behavior for PS servers appears to be hard.

The fourth contribution of the paper exists in applying the idea of the pair approximation
for PS servers to other service disciplines. More specifically, we show how to develop
pair approximations for systems with FCFS, LCFS and LPS($K$) servers.
The main distinction between the initial mean field model and these pair approximations is
that the pair approximations take some of the correlations between the queue lengths of servers that share a job into account. We also show that if we assume independence between queue lengths of replicas, all the pair 
approximations collapse to the initial mean field model.
As such all our pair approximations can be regarded as refinements of the original mean
field model, tailored such that they take the specifics of the queueing discipline into account.

Finally we compare numerically the performance of the different service disciplines. The results show that FCFS performs best in terms of the queue length
distribution and thus in terms of the mean response time with gains exceeding
$20\%$ compared to LCFS under high loads.

The paper is structured as follows. Section~\ref{sec:model} introduces the model and shows that
the service discipline matters. Section~\ref{sec:approx} presents the mean field approximation, which is insensitive to the service discipline.
Section~\ref{sec:PS-approx} contains the pair and triplet approximations for PS servers. Section~\ref{sec:service-disciplines} shows how to derive pair approximations for FCFS, LPS($K$) and LCFS. Section~\ref{sec:numerical} presents numerical results on the impact of the service discipline.
Section~\ref{sec:conclusions} contains conclusion and future work.

\section{Model}
\label{sec:model}

\subsection{The redundancy-d model}

We consider a system of $n$ identical servers where $n$ is large. Jobs arrive in the system at rate $n\lambda$. When a job arrives, $d$ replicas  of this jobs are created and sent to $d$ servers chosen at random. We consider a cancel-on-complete policy: this means that when a replica of a job is completed, the $d-1$ other replicas are canceled and removed from the other servers. This occurs instantaneously.  We refer to these $d-1$ other replicas of a job as its $d-1$ {\it buddies}.  We assume that the size of all replicas are \emph{i.i.d} and exponentially distributed with mean $1$.  While \emph{i.i.d.}
and exponentially distributed job sizes is somewhat restrictive from a practical
point of view, we explain in Appendix~\ref{apx:JIQ-d} that this restricted setting
is of particular interest to study Join-Idle-Queue load balancing with redundancy (JIQ-RED), which was recently introduced in \cite{Harte}.

\subsection{Service discipline matters}
\label{ssec:PSneFCFS}

It is shown in \cite{anton2021stability} that the stability region of PS and FCFS is identical, that is, both models are stable for $\lambda < 1$ under exponential job sizes when the mean job size equals $1$. 
\red{Further, in \cite{antonMAMA2023} the authors illustrate a coupling approach  that suggests that the queue length distribution is also identical for a variety of service disciplines, including PS and FCFS, in a redundancy-$d$ system.} However, we provide numerical evidence in this section
that illustrates that the queue length distribution of a server \emph{is} affected by the service
discipline even for a model of exponentially distributed and i.i.d.~replicas.
We illustrate this in two settings: (i) in a system with $n=3$ servers, where we present
both numerical and simulation results, and (ii) in a system with $n=10^5$ servers using simulation.

For the $3$ server case we numerically compute the steady-state distribution for the redundancy-d model with $d=2$. For FCFS, an explicit formula can be derived from the product form distribution of \cite{gardner2017redundancy}, and is obtained by using the order-independent-queue framework. For PS, we construct a Markov chain with $(K+1)^3$ states, where $K$ is the maximal number of jobs per class. In such a chain, the state is represented as $(s_1,s_2,s_3)$ where $s_i$ is the number of jobs for which there is a replica in servers $i$ and $i+1$ (with the convention that $3+1=1$).  For each $i$, $s_i$ becomes $s_i+1$ at rate $\lambda$ (arrival) and $s_i$ becomes $s_i-1$ at rate $s_i/(s_i+s_{i+1})+s_i/(s_i+s_{i-1})$ because there are $s_i+s_{i-1}$ replicas currently served in server $i$ (with the convention $s_0=s_3$).  We obtain the steady-state distribution by inverting the matrix. 
The steady-state distribution of the number of jobs in a server for PS and for FCFS are reported in Table~\ref{tab:steady-state}.  This table shows that the two steady-state distribution are close, but distinct.  To rule out numerical issues or effects due to the bounded value of $K$, we compare the theoretical values with simulations that show the exact same trend. 

\begin{table}[t]
    \begin{tabular}{|c|c|c|c|c|c|c|c|c|}
        \hline
        Number of jobs & mean & 1 & 2 & 3 & 4 & 5 & 6& 7\\
        \hline
        FCFS (exact \cite{gardner2017redundancy}) & 0.88889 & 0.27500 & 0.12875 & 0.05619 & 0.02365 & 0.00975 & 0.00397 & 0.00161 \\
        FCFS (simu) & 0.88892 & 0.27499 & 0.12875 & 0.05619 & 0.02365 & 0.00975 & 0.00397 & 0.00161 \\\hline
        PS (chain $K{=}20$) & 0.89225 & 0.27423 & 0.12862 & 0.05641 & 0.02391 & 0.00994 & 0.00409 & 0.00167 \\
        PS (simu) & 0.89217 & 0.27422 & 0.12862 & 0.05642 & 0.02391 & 0.00994 & 0.00409 & 0.00167 \\\hline
    \end{tabular}
    \caption{Mean queue length and steady-state distribution of the queue lengths of servers for a redundancy-2 system with $n=3$, arrival rate $\lambda=0.5$ and service rate $\mu=1$.}
    \label{tab:steady-state}
\end{table}

For the setting with $n=10^5$ servers we simulated a system with $d=2$ for FCFS, LPS($K$), PS and LCFS,. Table~\ref{tab:simul10000} presents simulation results for $\lambda=0.9$ averaged over four runs of length $10^5$.  While we observed in Table~\ref{tab:steady-state} that the differences between the different service disciplines are quite small at low loads, the differences are much more pronounced at high load. Our simulation results clearly show differences in queue length distribution that cannot be attributed to simulation noise.

\begin{table}[t]
    \centering
    \begin{tabular}{|c|c|c|c|c|c|c|c|c|}
    \hline 
        Number of jobs & mean & 1 & 2 & 3 & 5 & 10\\ \hline
         FCFS:	&3.1171  & 1.4884e-01  & 1.7984e-01 &  1.7882e-01 &  1.0707e-01 &  3.3898e-03\\\hline
         LPS(2): & 3.2243  & 1.4480e-01  & 1.7229e-01   &1.7197e-01 &  1.0919e-01 &  4.8019e-03\\
         LPS(5):& 3.3483 &1.4140e-01  & 1.6511e-01 &  1.6434e-01   &1.0958e-01  & 7.0372e-03\\
         LPS(10): &3.3884 &  1.4030e-01  & 1.6296e-01 &  1.6211e-01 &  1.0959e-01  & 7.7873e-03 \\
         \hline
         PS:&	3.3890  & 1.4029e-01  & 1.6293e-01 &  1.6207e-01 &  1.0959e-01  & 7.8041e-03\\ \hline
         LCFS:& 3.9415 &  1.2816e-01 &  1.3965e-01&   1.3677e-01  & 1.0405e-01  & 1.9717e-02
         \\\hline
    \end{tabular}
    \caption{Simulation results for number of jobs in a queue of a redundancy-2 system with $n=100000$ servers, arrival rate $\lambda=0.9$ and service rate $\mu=1$. }
    \label{tab:simul10000}
\end{table}

\section{Mean field approximation}
\label{sec:approx}

In this section, we construct a mean field approximation for the redundancy-d model. This approximation does not depend on the service discipline as long as it is work-conserving. As we will see later, this approximation is not asymptotically exact, and one can derive more accurate approximations by using pair or triplet approximations.

\subsection{Derivation of the ODE: transient regime}
To construct the mean field approximation, let us first compute the rate at which a replica is served. Let $q_t(x)$ be the fraction of servers who have $x$ replicas at time $t$ and pick a replica at random (uniformly among all replicas in the system at time $t$). This replica is in a server with $y$ replicas with probability $y q_t(y)/\bar{q}_t$, where $\bar{q}_t = \sum_{y\ge1} yq_t(y)$. This implies that a random replica is served at rate 
\begin{align*}
    \sum_{y\ge1}\frac{1}{y} \frac{y q_t(y)}{\bar{q}_t} = \frac{1-q_t(0)}{\bar{q}_t},
\end{align*}
where in the above equation the factor $\frac{1}{y}$ is due to work-conserving nature of the policy: each server serves replicas at rate $1$, which means that a randomly chosen replica among these $y$ is served at rate $1/y$. For instance, using PS all $y$ replicas
are served at rate $1/y$, hence also a random one. Under FCFS or LCFS the random replica is served with probability $1/y$ at rate $1$, under limited-PS the replicas in the first $K$ positions are served at rate $1/K$ (if $y \geq K$) and the random replica is
served at rate $1/K$ with probability $K/y$. 

Let us now focus on a given server that has $x$ replicas in its queue. There are three types of events that affect this server: \begin{enumerate}
    \item A replica can be added to its queue, which occurs at rate $d\lambda$.
    \item The server might complete the service of one of its replicas, which occurs at rate $1$.
    \item One of the replica can be removed from the queue because another replica of the same job, that is, a buddy, has been served by another server. The rate at which this last event occurs is complex and depends on the queue lengths of all servers who hold the buddies of the $x$ jobs that our current server has. This is where we need an approximation. The mean field approximation consists in 
    {\bf assuming that the queue lengths of the servers holding the different replicas of the same job are independent. 
    This is equivalent to stating that the $d-1$ buddies are random replicas.} This means that each of the $d-1$ buddies disappears 
    at rate $(1-q_t(0))/\bar{q}_t$. Hence, the total rate at which this event occurs is $(d-1)x(1-q_t(0))/\bar{q}_t$.
\end{enumerate}
To summarize, replicas appear at rate $d\lambda$ and disappear at rate $(1+(d-1)x(1-q_t(0))/\bar{q}_t)$ in a server that has $x$ replicas.  This leads to  a system of differential equations for $q_t(x)$: for $x>0$:
\begin{align}\label{eq:indepq}
    \frac{d q_t(x)}{dt} =~ & d\lambda \left[q_t(x-1)-q_t(x)\right] + q_t(x+1) - q_t(x) \\
    &+  \frac{(1-q_t(0))(d-1)}{\bar q_t} \left[(x+1)q_t(x+1)-xq_t(x)) \right],\nonumber
\end{align}
where $q_t(0)=1-\sum_{x\geq 1} q_t(x)$.

The solution of the ODE~\eqref{eq:indepq} is what we call the mean field approximation. In this equation, the first line corresponds to the local arrival and service of replicas (type 1 and type 2 events), while second line is caused by the completion of buddies in other servers.  We emphasize that this model is just an {\bf approximation}: as we will see later in Section~\ref{sec:numerical}, it is not asymptotically exact as the number of servers grows, contrary to what happens in many systems. This is due to the fact that the queue lengths of servers holding a replica of the same job are not independent (when the number of servers is very large, the $d$ queue lengths are independent at the time when the job arrives, but a dependence is created as time progresses). It can be shown that this implies that the model for JIQ-Red($d$) in \cite{Harte} is not asymptotically
exact either.

\subsection{Fixed point analysis}

The mean field equation leads to an almost-closed form result for the fixed point.
\begin{theorem}\label{th:q}
    The set of ODEs given by \eqref{eq:indepq} has a unique fixed point given by
    \begin{align}\label{eq:mf-fixed}
        q(x) = q(x-1) \frac{d\lambda}{1+\frac{\lambda x(d-1)}{\bar q}} =  (1-\lambda) \frac{(d\lambda)^x}{\prod_{\ell = 1}^x (1+\frac{\lambda \ell(d-1)}{\bar q})}, 
    \end{align}
    where $\bar q$ is the unique solution on $(0,\infty)$ of the equation 
    \[
    \frac{\lambda}{1-\lambda}= \sum_{x\geq 1} \frac{(d\lambda)^x}{\prod_{\ell=1}^x (1+\frac{\lambda (d-1) \ell}{\bar q})} =
    \left(\frac{d \bar q}{d-1}\right)^{-\bar q/(\lambda(d-1))} e^{d\bar q/(d-1)} \gamma(1+\bar q/(\lambda (d-1)),d\bar q/(d-1)),\]
where $\gamma(s,x) = \int_0^x t^{s-1} e^{-t} dt$ is the lower incomplete gamma function
and where the right-hand side is increasing in $\bar q$.
\end{theorem}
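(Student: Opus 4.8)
The plan is to recognize the fixed-point equations as the stationary balance equations of an ordinary birth--death chain and then close the mean-field self-consistency by a single scalar equation. Setting the left-hand side of \eqref{eq:indepq} to zero, the equations for $x>0$ together with normalization $\sum_{x\ge0}q(x)=1$ are exactly the balance equations of a birth--death process on $\{0,1,2,\dots\}$ whose birth rate is $d\lambda$ in every state and whose death rate in state $x$ is $1+cx$, where $c:=(1-q(0))(d-1)/\bar q$ is, at the fixed point, a \emph{constant}. First I would introduce the probability flux $J(x):=d\lambda\,q(x)-(1+c(x+1))q(x+1)$ and check from the balance equation at state $x$ that $J(x)=J(x-1)$ for all $x\ge1$; since $q(x)\to0$ faster than geometrically (the factor $d\lambda/(1+c\ell)\to0$), we have $J(x)\to0$, hence $J\equiv0$. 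This detailed-balance relation is precisely the first equality of \eqref{eq:mf-fixed} up to the value of $c$, and unrolling it gives $q(x)=q(0)\prod_{\ell=1}^x d\lambda/(1+c\ell)$.

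Next I would pin down $c$ and $q(0)$. Summing the detailed-balance identity $d\lambda\,q(x)=(1+c(x+1))q(x+1)$ over all $x\ge0$ and using normalization on the left and the definition of $\bar q$ on the right yields the identity
\begin{align*}
d\lambda=(1-q(0))+c\,\bar q=(1-q(0))+(1-q(0))(d-1)=d\,(1-q(0)),
\end{align*}
so that $q(0)=1-\lambda$ and $c=\lambda(d-1)/\bar q$. Substituting these values turns the recursion into the two displayed forms of \eqref{eq:mf-fixed}. The single remaining unknown is $\bar q$, which I would fix by the self-consistency condition. Imposing $q(0)=1-\lambda$ (equivalently, normalization) gives
\begin{align*}
\frac{\lambda}{1-\lambda}=\sum_{x\ge1}\frac{(d\lambda)^x}{\prod_{\ell=1}^x\bigl(1+\tfrac{\lambda(d-1)\ell}{\bar q}\bigr)}.
\end{align*}
I would then note that the \emph{other} consistency requirement, $\bar q=\sum_{x\ge1}xq(x)$, is not an independent constraint: the summed-balance identity above already forces $c\sum_{x\ge1}xq(x)=(d-1)\lambda$ whenever $q(0)=1-\lambda$, that is $\sum_{x\ge1}xq(x)=\bar q$ once $c=\lambda(d-1)/\bar q$. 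This observation is the crux of the argument, since it shows the apparent two self-consistency equations in the two mean-field quantities collapse to the single scalar equation for $\bar q$ and the system is not over-determined.

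To establish existence and uniqueness of $\bar q\in(0,\infty)$ I would argue by monotonicity: increasing $\bar q$ decreases each factor $\tfrac{\lambda(d-1)\ell}{\bar q}$, hence decreases every denominator and increases every summand, so the right-hand side is strictly increasing in $\bar q$. As $\bar q\to0^+$ each summand tends to $0$ and the sum tends to $0$, while as $\bar q\to\infty$ the sum tends to $\sum_{x\ge1}(d\lambda)^x$, which is $+\infty$ when $d\lambda\ge1$ and equals $d\lambda/(1-d\lambda)>\lambda/(1-\lambda)$ when $d\lambda<1$; in either case the increasing continuous right-hand side crosses the level $\lambda/(1-\lambda)$ exactly once. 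Finally, for the closed form in terms of the incomplete gamma function I would substitute $\alpha:=\bar q/(\lambda(d-1))$ and $\beta:=d\bar q/(d-1)=d\lambda\,\alpha$, rewrite the product as $\prod_{\ell=1}^x(1+\ell/\alpha)=\Gamma(\alpha+x+1)/(\alpha^x\Gamma(\alpha+1))$ so that the generic summand becomes $\beta^x\Gamma(\alpha+1)/\Gamma(\alpha+x+1)$, recognize the resulting power series via the expansion $\gamma(\alpha,\beta)=\beta^\alpha e^{-\beta}\sum_{k\ge0}\Gamma(\alpha)\beta^k/\Gamma(\alpha+k+1)$ of the lower incomplete gamma function, and simplify using the recurrence $\gamma(\alpha+1,\beta)=\alpha\gamma(\alpha,\beta)-\beta^\alpha e^{-\beta}$. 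I expect this last step to be purely computational; the main obstacle is the self-consistency bookkeeping of the previous paragraph, namely verifying that normalization and the definition of $\bar q$ are compatible and jointly reduce to one equation rather than two.
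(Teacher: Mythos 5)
Your proof is correct and follows essentially the same route as the paper: the tail/flux balance gives the birth--death recursion, a first-moment identity yields $q(0)=1-\lambda$, normalization reduces everything to a single monotone scalar equation for $\bar q$, and a reindexing of the series gives the incomplete-gamma form. You are in fact slightly more careful than the paper on two points it leaves implicit, namely verifying that the consistency condition $\bar q=\sum_{x\ge1} x\,q(x)$ holds automatically for the candidate (so it is genuinely a fixed point) and checking that the right-hand side of the scalar equation exceeds $\lambda/(1-\lambda)$ as $\bar q\to\infty$, which is needed for existence.
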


\begin{proof}
    Assume $q(0)$ and $\bar{q}$ are fixed, then the ODE \eqref{eq:indepq} essentially corresponds to a birth-death process. Hence, summing \eqref{eq:indepq} for $x \geq z$ implies
    \begin{align*}
        \sum_{x \geq z} \frac{d q_t(x)}{dt}  = -q_t(z) + d \lambda q(z-1) - (d-1)(1-q_t(0))\frac{z q_t(z)}{\bar q_t}.
    \end{align*}
    By setting $\frac{d q_t(x)}{dt}=0$, this implies that a fixed point of \eqref{eq:indepq} must satisfy (for $x>0$):
    \begin{align}\label{eq:sum_indep0}
        q(x) = q(x-1) \frac{d\lambda}{1 + (1-q(0))(d-1)\frac{x}{\bar{q}}}.
    \end{align}
    
    To compute $q(0)$, let us sum $x\frac{d q_t(x)}{dt}$ over $x$. This gives: 
    \begin{align}\label{eq:sum_indep}
        \sum_{x\ge1} x \frac{d q_t(x)}{dt}  
        & =  d \lambda - \sum_x q_t(x) + \frac{1-q_t(0)}{\bar q_t} \sum_x x q_t(x) \nonumber \\
        &= \lambda - 1+q_0(t).
    \end{align}
    This shows that any fixed point of the ODE~\eqref{eq:indepq} satisfies $q(0) = 1 - \lambda$ and therefore implies \eqref{eq:mf-fixed}.
    
    As $\sum_{x\geq 1} q(x)=\lambda$, $\bar q$ is a solution of
    \begin{align}\label{eq:fixed_barq}
        \frac{\lambda}{1-\lambda} =   \sum_{x \geq 1} \frac{(d\lambda)^x}{\prod_{\ell = 1}^x (1+\frac{\lambda \ell(d-1)}{\bar q})}. 
    \end{align}
    The right-hand side of this equality is clearly increasing in $\bar q$ and equals $0$ in $0$. We may therefore
    conclude that $\bar q$ is the unique fixed point of \eqref{eq:fixed_barq}.
    Remark that
    \[ \sum_{x\geq 1} \frac{a^x}{\prod_{\ell=1}^x (1+b \ell)} = 
    \left(\frac{a}{b}\right)^{-1/b} e^{a/b} \gamma(1+a/b,a/b),\]
    where $\gamma(s,x) = \int_0^x t^{s-1} e^{-t} dt$ is the lower incomplete gamma function.
    Letting $a=\lambda d$ and $b=\lambda (d-1)/\bar q$ completes the proof.
\end{proof}

\section{RED-PS: Dynamic graph model and pair/triplet approximations}
\label{sec:PS-approx}

What makes the mean field approximation asymptotically inexact is the dependencies between the servers: if a job has one replica in a server with $x$ replicas and its buddy is located in a server with $y$ replicas, then this job is more likely to disappear if $y$ is small than if $y$ is large. This implies that the queue lengths of servers holding replicas of the same job are not independent
(notwithstanding the fact that they were independent when the job was just added).  In this section, we take this dependence into account for the redundancy system with PS service by looking at the graph of dependence between servers.  
Other service disciplines are discussed in Section \ref{sec:service-disciplines}. For simplicity and numerical tractability, we restrict our attention to $d=2$ replicas. 
In Appendix \ref{apx:d>2} we indicate how our models can be generalized to $d > 2$ replicas.

\subsection{The dynamic graph model}

When there are $d=2$ replicas per job, the RED-PS model can be represented by a dynamic random (multi)graph $(V, E_t)$. The graph is undirected and the set of vertices $V$ is the set of servers. The set of edges $E_t$ evolves over time and there is an edge between $u \in V$ and $v \in V$ for each job that is shared by servers $u$ and $v$ at time $t$.  The graph $(V,E_t)$ is in fact a multigraph since there can be multiple edges between two nodes when two servers share more than one job. There can be also loops if two replicas of the same jobs are connected to the same server.  An example of such a graph is depicted in Figure~\ref{fig:dynamic_graph}(a). 

\begin{figure}[ht]
    \centering
    \begin{tabular}{ccc}
        \begin{minipage}{0.28\linewidth}
            \centering
            \begin{tikzpicture}
                \tikzset{every loop/.style={}}
                \tikzstyle{vertex}=[circle,fill,inner sep=0pt,minimum size=0.1cm]
                \newcommand\vertex[4]{\node[vertex] at (#1,#2) (#3)  {} ;%
                \node[above=0.2] at (#1,#2) {\small #4};}
                \vertex{0}{0}{0}{2};
                \vertex{1}{0}{1}{4};
                \vertex{1.7}{-.5}{2}{2};
                \vertex{1.7}{.5}{3}{2};
                \vertex{2.4}{.5}{4}{2};
                \vertex{3.1}{.5}{5}{1};
                \vertex{0}{-.7}{6}{0};
                \vertex{.5}{-.7}{7}{0};
                \vertex{2.4}{-.7}{8}{1};
                \vertex{3.1}{-.7}{9}{1};
                \draw (0) -- (1) -- (3) -- (4) -- (5) (8) -- (9);
                \draw (1) edge[bend left] (2);
                \draw (1) edge[bend right] (2);
                \draw (0) edge[loop left] (0);
            \end{tikzpicture}
        \end{minipage}
        &
        \begin{minipage}{0.28\linewidth}
            \centering
            \begin{tikzpicture}
                \tikzset{every loop/.style={}}
                \tikzstyle{vertex}=[circle,fill,inner sep=0pt,minimum size=0.1cm]
                \newcommand\vertex[4]{\node[vertex] at (#1,#2) (#3)  {} ;%
                \node[above=0.2] at (#1,#2) {\small #4};}
                \vertex{0}{0}{0}{1};
                \vertex{1}{0}{1}{3};
                \vertex{1.7}{-.5}{2}{1};
                \vertex{1.7}{.5}{3}{2};
                \vertex{2.4}{.5}{4}{2};
                \vertex{3.1}{.5}{5}{1};
                \vertex{0}{-.7}{6}{0};
                \vertex{.5}{-.7}{7}{0};
                \vertex{2.4}{-.7}{8}{1};
                \vertex{3.1}{-.7}{9}{1};
                \draw (0) -- (1) -- (2) (1) -- (3) -- (4) -- (5) (8) -- (9);
            \end{tikzpicture}
        \end{minipage}
        &\begin{minipage}{.35\linewidth}
            \begin{tabular}{|c|c|c|c|c|c|}
            \hline
            $(x,y)$ & (1,1) & (1,2) & (1,3) & (2,3)\\
            & (2,2) & (2,1) & (3,1) & (3,2)\\\hline
            $\pi(x,y)$ & $\frac1{10}$ & $\frac1{20}$ & $\frac{1}{10}$& $\frac{1}{20}$ \\
            \hline
        \end{tabular}
        \end{minipage}\\~\\
        (a) The multigraph model
        &(b) The (simple) graph
        &(c) Values $\pi(x,y)$
    \end{tabular}    
    \caption{Example of a dynamic random graph for a system with $n=10$ servers. The labels on the servers indicate degrees. A new edge is added between two nodes at rate $2\lambda$. An edge connecting two nodes $(u,v)$ of degrees $d_t(u)$ and $d_t(v)$ disapears at rate $1/d_t(u)+1/d_t(v)$.}
    \label{fig:dynamic_graph}
\end{figure}

The multigraph evolves as follows:
\begin{itemize}
    \item For any pair of servers $(u,v)$, a new edge $(u,v)$ is added to the graph at rate $2\lambda/n$. 
    \item Any vertex has an internal Poisson clock of intensity $1$. When this clock ticks, it destroys one of its edges (taken uniformly at random among all its edges).  
\end{itemize}
Let us denote by $d_t(u)$ the degree of a node $u$ at time $t$, which is also equal to the number of replicas that this server holds. The above transitions imply that if an edge connects two servers of degree $d_t(u)$ and $d_t(v)$, then this edge disappears at rate $1/d_t(u)$ + $1/d_t(v)$.

The multigraph model corresponds exactly to the RED-PS model. In what follows, we will use this graph representation to construct two sets of ODEs that are a very good approximation of the RED-PS model when the number of servers $n$ is large. Yet, to simplify the derivation of the equations, we will construct these ODEs by using a slightly simpler model in which we forbid the creation of multi-edges or loops -- the rest of the model remains the same.  The argument to do so is that, when the number of servers $n$ is large, the rate at which loops or multi-edges are created is $O(1/n)$, while the rate at which such edges are removed remains $O(1)$. Hence, the difference between the two models is likely to be of order $O(1/n)$.

\subsection{The Pair Approximation}\label{sec:PSpair}

To construct our mean field approximation in Section \ref{sec:approx}, the idea was to focus on servers and study the number of replicas per server. To obtain a more accurate approximation, here we change our point of view and focus on the jobs rather than on servers, and we look at how many replicas do the two servers of a job hold. 

\subsubsection{Fraction of pairs}
As before, we denote by $d_t(u)$ the degree of a node $u$ at time $t$. It is equal to the number of neighbors in the graph $G_t$, that is, $d_t(u) = |\Gamma_t(u)|$, where $\Gamma_t(u)=\{v \mid (u,v) \in E_t\}$ is the set of neighbors of the node $u$.  We denote by $\pi_t(x,y)$ the number of edges that connect a node of degree $x$ with a node of degree $y$ at time $t$ divided by $n$ (the number of servers) if $x=y$, and by $2n$
if $x \not=y$:
\begin{align}
    \label{eq:pi(x,y)}
    \pi_t(x,y) &= \frac{1}{2n} \sum_{u \in V} \sum_{v \in \Gamma_t(u)} 1[d_t(u)=x, d_t(v)=y],
\end{align}
where the factor $1/2$ is because one can exchange the role of $u$ and $v$: an edge that connects two nodes with a different degree, say degree $x$ and $y$, contributes $1/2n$ to $\pi_t(x,y)$ and $1/2n$ to $\pi_t(y,x)$ whereas an edge that connects two nodes of degree $x$ simply contributes $1/n$ to $\pi_t(x,x)$.

By construction, the quantity $\pi_t(x,y)$ is symmetric and only makes sense if $x\ge1$ and $y\ge1$ because if a server is connected to a job, it has at least one replica. For $x\ge1$, we denote:
\begin{align}
    \label{eq:pi(x)}
    \pi_t(x) = \sum_{y \geq 1} \pi_t(x,y) = \sum_{y \geq 1} \pi_t(y,x).
\end{align}
As for the mean field approximation, we denote by $q_t(x)$ the fraction of servers holding exactly $x$ jobs. The total number of replicas residing in servers with queue length $x$ is equal to $2n\pi_t(x)$. Notice that here we count every edge $(x,y)$ 
with $x \not= y$ once (as it contributes $1/2n$ to $\pi_t(x,y)$) and every edge $(x,x)$ twice as this edge corresponds to
two such replicas. Dividing this total number $2n\pi_t(x)$ by $nx$ yields $q_t(x)$ (for $x\ge1$):
\begin{align}
    \label{eq:q_from_pi}
    q_t(x) = \frac{2\pi_t(x)}{x}.
\end{align}
We end by defining the conditional probability that a random job in a queue of length $x$ has a buddy in a queue with length $y$,
which we denote as $\pi_t(y|x)$.  As stated above the total number of replicas residing in servers with queue length $x$ is equal to $2n\pi_t(x)$, while $2n \pi_t(x,y)$ is the total number of replicas in a queue with length $x$ connected to a queue with length $y$.
Hence,
\begin{align}\label{eq:condpi}
    \pi_t(y|x) = \frac{\pi_t(x,y)}{\pi_t(x)},
\end{align}
which holds irrespective of whether $x$ equals $y$.

We illustrate the definition of $\pi$ in Figure~\ref{fig:dynamic_graph}(c) where we show the values of $\pi(x,y)$ for the graph of Figure~\ref{fig:dynamic_graph}(b). 
One can verify that the degree distribution is $q(1)=\pi(1)=2(1/20+1/10+1/10)=5/10$, $q(2)=2\pi(2)/2=(1/10+1/20+1/20)=2/10$, $q(3)=2(1/10+1/20)/3=1/10$ and $q(0)=1-q(1)-q(2)-q(3)=2/10$. For the conditional probabilities, we also have for instance that $\pi(2|1)=\pi(1,2)/(\pi(1,1)+\pi(1,2)+\pi(1,3))=1/5$, and $\pi(1|2)=1/4$. Note that conditional probabilities are (in general) not symmetric.

\subsubsection{Rates}

In what follows, we call $(x,y)$ the type of a job. There are two main types of events that can modify the number of jobs of a given type.  First, there can be a completion or a creation of a job: at rate $1/x+1/y$ a job of type $(x,y)$ is served, in which case we have one job less of type $(x,y)$. A new job is created at rate $n\lambda$, in which case it connects to two servers at random. These servers have $x-1$ and $y-1$ jobs with probability $2q_t(x-1)q_t(y-1)$ if $x\not= y$ and $q_t(x-1)^2$ otherwise.  Notice that when $x \not= y$  the edge that is created contributes $1/2n$ to $\pi_t(x,y)$ and $1/2n$ to $\pi_t(y,x)$ so the $2$
vanishes. Hence, there is a creation of a job of type $(x,y)$ at rate $n\lambda q_t(x-1)q_t(y-1)$.

Second, jobs can also change type.  The type can change from $(x,y)$ to $(x+1,y)$ (or to $(x,y+1)$) if a replica is added to the first server (or to the second). This occurs at rate $2\lambda$. The type can also change from $(x,y)$ to $(x-1,y)$ in two cases:
\begin{enumerate}
    \item When any of the other $x-1$ replicas is completed in the first server. This occurs at rate $(x-1)/x$. 
    \item When a job is cancelled because the buddy of one of the $x-1$ replicas in the first server completes. Here, we need 
    to make an approximation. The pair-approximation consists in {\bf assuming that if two jobs happen to share a single server $s$, 
     the queue lengths of the two other servers (other than $s$) that also serve these jobs have independent queue lengths given the length of $s$.}  Mathematically speaking the pair approximation consists in assuming that $\pi_t(y'|x)$ is the probability that the buddy of a job in a server of length $x$ has length $y'$.
     This corresponds to assuming that the server length of the buddy of any of these $x-1$ jobs only depends on $x$
     (and not on $y$).  
    As a result, a buddy of the $x-1$ other replicas completes at rate $\sum_{y' \geq 1}q_t(y'|x) \frac{1}{y'}$, where $1/y'$ is the completion rate of a replica when there are $y'$ jobs under PS.
\end{enumerate}
Summing both terms, a job transitions from type $(x,y)$ to $(x-1,y)$ at rate $\frac{x-1}{x} + h_t(x)$, where 
\begin{align}
    \label{eq:h-pair}
    h_t(x) = (x-1) \sum_{y' \geq 1} \pi_t(y'|x) \frac{1}{y'}.
\end{align}
Similarly, a job transitions from $(x,y)$ to $(x,y-1)$ at rate $\frac{y-1}{y} +h_t(y)$.

Note that the rate at which the job transitions from $(x,y)$ to $(x-1,y)$ does not depend on $y$. This is because we approximate the rate at which buddies are served by \eqref{eq:h-pair}. This equation is where we make an approximation, by considering that the queue length of the buddies of the replicas in a server of length $x$ only depends on $x$, but not on the other buddies. This is called a \emph{pair} approximation because we limit the graph of dependence to pairs.

\subsubsection{The set of ODEs for the pair approximation}

Based on the above transitions, we can now write an ODE that indicates how the quantities $\pi_t(x,y)$ evolve. For $x,y\ge1$, this ODE is: 
\begin{align}\label{eq:ODEPS}
    \frac{d \pi_t(x,y)}{dt} &= \lambda q_t(x-1) q_t(y-1) 
    +2\lambda \left[\pi_t(x-1,y)+\pi_t(x,y-1)-2\pi_t(x,y)\right] 
    \nonumber \\ 
    &\quad+ \pi_t(x+1,y)\left[ h_t(x+1)+ \frac{x}{x+1}\right] + \pi_t(x,y+1)\left[ h_t(y+1)+ \frac{y}{y+1}\right]  \nonumber\\
    &\quad - \pi_t(x,y)\left[2+h_t(x)+h_t(y)\right],
\end{align}
where $q_t$ is defined as in \eqref{eq:pi(x)} and \eqref{eq:q_from_pi}, and with the convention that $\pi_t(0,y)=\pi_t(x,0)=0$ for all $x,y\ge0$. 

Equation~\eqref{eq:ODEPS} can be understood as follows. The first term corresponds to the creation of jobs of type $(x,y)$, and the second term corresponds to job changing types because of arrivals. The second line corresponds to jobs changing from type $(x+1,y)$ or $(x,y+1)$ to type $(x,y)$. The third line is actually a sum of two transition types: completions of jobs of type $(x,y)$ (which would be a term $\pi_t(x,y)(1/x+1/y)$), plus jobs of type $(x,y)$ transitioning to $(x-1,y)$ or $(x,y-1)$ (which would be a term $\pi_t(x,y)\left[(x-1)/x+h_t(x)+(y-1)/y+h_t(y)\right]$). Summing both terms lead to the last line of \eqref{eq:ODEPS}. In Appendix~\ref{apx:morePS}, we preset additional results for the pair approximation concerning the properties of the fixed point and the relationship between the pair approximation and the mean field approximation.

\subsubsection{Model Validation}

In this section we present some numerical results to illustrate the accuracy of the pair approximation
and the original mean field model using simulation. It also includes numbers for the triplet approximation
for later reference, which we ignore for now.
We simulated systems with $n=10^2, 10^3, 10^4$ and $10^5$
servers and loads $\lambda=0.5, 0.7$ and $0.9$.  The simulation numbers were averaged over $4$ runs of length $10^5$ (or more for smaller $n$) each. Table \ref{tab:simulPAIR09} presents the results for $\lambda=0.9$.
The numbers for $\lambda = 0.5$ and $0.7$ can be found in the Appendices in Tables \ref{tab:simulPAIR05} and \ref{tab:simulPAIR07}. The first column in each table presents the mean queue length, the second to last column
present the probability that the queue length equals $1,2,3,5$ and $10$, respectively. The fixed point of the
set of ODEs characterizing the pair approximation was computed by using a simple Euler iteration starting
from an empty system at time $t=0$.

We first note that the pair approximation is much more accurate than
the mean field approximation, with relative errors below $0.1\%$ except for the last probability.
We also remark that first the five digits of the simulation numbers are identical for $n=10^4$ and $n=10^5$, meaning that increasing $n$ further probably has limited impact. This suggests that convergence has occurred and the pair
approximation is asymptotically inexact. 
\begin{table}[t]
    \centering
    \begin{tabular}{|r|c|c|c|c|c|c|c|c|}
    \hline 
Number of jobs & mean & 1 & 2 & 3 & 5 & 10\\ \hline
\hline 
Simul $n=10^2$ &3.4449  & 1.3943e-01 &  1.6088e-01  & 1.5932e-01  & 1.0855e-01  & 9.1793e-03 \\
 $n=10^3$&   3.3947 &  1.4018e-01 &  1.6272e-01 &  1.6178e-01 &  1.0949e-01  & 7.9446e-03\\
 $n=10^4$ &   3.3890 &  1.4030e-01 &  1.6293e-01  & 1.6205e-01 &  1.0956e-01  & 7.8142e-03\\
 $n=10^5$&   3.3890 &  1.4030e-01 &  1.6293e-01  & 1.6205e-01 &  1.0956e-01  & 7.8142e-03\\
\hline
Triplet Approx. & 3.3894 &   1.4020e-01  & 1.6289e-01 &  1.6207e-01 &  1.0962e-01 &  7.8059e-03 \\\hline
Pair Approx. &   3.3875 &  1.4032e-01 &  1.6300e-01  & 1.6215e-01 &  1.0960e-01  & 7.7756e-03\\
\hline
Mean field &   3.3377 &  1.4177e-01 &  1.6578e-01  & 1.6496e-01 &  1.0950e-01  & 6.8438e-03\\
         \hline
    \end{tabular}
    \caption{Accuracy of Mean Field, Pair and Triplet Approximation for PS servers, arrival rate $\lambda=0.9$. }
    \label{tab:simulPAIR09}
\end{table}

\subsection{Triplet approximation}\label{sec:triplet}
In this section we introduce the so-called {\it triplet} approximation, which is meant to further improve upon the pair approximation. While the pair approximation was obtained by considering pairs of servers that share jobs, here we go beyond, by consider triplets of servers such that the first two share a job and the last two share another job. This requires more formalism to construct the approximation.  As for the pair approximation, we consider the case with two replicas ($d=2$).

\subsubsection{Connected triplets}

Recall from \eqref{eq:pi(x,y)} that $\pi_t(x,y)$ is the number of pairs of servers with degree $x$ and $y$ rescaled by the number of servers $n$ (or $2n$ if $x \not= y$). 
In this part, we now focus on triplets.  Consider three nodes $u,v,w \in V$ such that $(u,v) \in E_t$, $(v,w)\in E_t$, $d_t(u)=x$, $d_t(v)=y$ and $d_t(w)=z$. We refer to these three nodes as a triplet of type $(x,y,z)$ or $(z,y,x)$, and we denote by $c_t(x,y,z)$ the number of triplets of type
$(x,y,z)$, rescaled by $n$ (or $2n$ if $x \not= z$). For all $x,y,z$, they can be expressed as:
\begin{align}
    c_t(x,y,z) & = \frac{1}{2n}\sum_{v \in V} \sum_{u \in \Gamma_t(v)} \sum_{w \in \Gamma_t(v), w \not= u}  1[d_t(u)=x,d_t(v)=y,d_t(w)=z],
\end{align}
where as before $\Gamma_t(u)$ are the neighbors of $u$. 

The factor $1/2$ in the above equation is because the triplet can be read in two directions when $x \not= z$ : $(x,y,z)$ and $(z,y,x)$. We therefore let every such triplet contribute $1/(2n)$ to $c_t(x,y,z)$ and $1/(2n)$ to $c_t(z,y,x)$ if $x \not= z$. When $x=z$ it simply contributes $1/n$ to $c_t(x,y,z)$.  Note that the model is symmetric in the sense that $c_t(x,y,z)=c_t(z,y,x)$. By construction, $c_t(x,y,z)$ is zero if $y=1$ or if any of $x$, $y$ or $z$ equals $0$. For example, if we look at Figure \ref{fig:dynamic_graph}(b) we find that
$c_t(1,2,2)=c_t(2,2,1)=1/20$ and $c_t(1,3,2)=c_t(2,3,1)=1/10$.

To build our approximation, we will need the degree distribution and the conditional degree of a node given its neighbors. As shown in \eqref{eq:q_from_pi}, the degree can be expressed from $\pi_t(x,y)$ for $x\ge1$.  Similarly, when $y\ge2$, the quantity $\pi_t(x,y)$ can be expressed from the quantities $c_t(x,y,z)$ by noting that each triplet $(x,y,z)$ corresponds to $y-1$ pairs $(x,y)$. By using \eqref{eq:q_from_pi}, we can also recover the degree from $c_t(x,y,z)$ for all $y\ge2$: 
\begin{align*}
    \pi_t(x,y) &= \frac{1}{y-1}\sum_{z \geq 1} c_t(x,y,z),\\
    q_t(y) &= \frac{2}{y} \sum_{x \geq 1} \pi_t(x,y) = \frac{2}{y(y-1)} \sum_{x,z \geq 1} c_t(x,y,z).
\end{align*}
This last expression is also immediate because every degree $y$ node has $y(y-1)/2$ triplets with $y$ as a middle node. 

When $y=1$, the node $v$ cannot be in the middle of a triplet. It can be connected to a node of degree $1$ (in which case they form a pair but not a triplet), or it can be connected to a node of degree $x\ge2$ which is again connected to a node $z\ge1$. When $x\ge2$, there are $x-1$ triplets that have the degree one node $v$ as an end node. This implies that 
\begin{align*}
    q_t(1) = 2\sum_{x\ge1} \pi_t(1,x) = 2 \pi_t(1,1) + 2 \sum_{x \geq 2,z \geq 1} \frac{c_t(1,x,z)}{x-1}.
\end{align*}

We also need the conditional probability, denoted as $c_t(x|y)$ that a random neighbor of a random degree $y$ node has degree $x$
at time $t$ (which was denoted as $\pi_t(x|y)$ in the pair approximation model). Assume $y\geq 2$, then we can count the scaled number of triplets with a degree $y$ middle node and a degree $x$ end node,
which equals $2\sum_z c_t(x,y,z) / (y-1)$ as every such triplet is counted $y-1$ times, and
divide by the scaled number of neighbors of all degree $y$ nodes, given by $yq_t(y)$.
Therefore, for $y\ge2$ and $x\ge1$, we have:
\begin{align}
    c_t(x|y) = \frac{2}{y(y-1)q_t(y)}\sum_{z \geq 1} c_t(x,y,z), 
\end{align}
where the $2$ is due to the symmetry.  Note that the same expression also follows from \eqref{eq:condpi}. For a degree $1$ node we can use the same reasoning to find
\begin{align}
    c_t(x|1) &= \frac{2}{(x-1)q_t(1)}\sum_{y \geq 1} c_t(1,x,y)\qquad \text{ for $x\ge2$,}\\
    c_t(1|1) &= \frac{2\pi_t(1,1)}{q_t(1)}.
\end{align}
It can be verified from the formulas that, by construction, $\pi_t(y|x)=c_t(y|x)$.

The last building block that we need to define is the conditional probability that the degree of an end node
of a triplet equals $z$ given that its other end node has degree $x$ and the middle node has degree $y$
at time $t$, which
we denote as $c_t(z|x,y)$. Due to symmetry we have
\begin{align}
    c_t(z|x,y) = \frac{c_t(x,y,z)}{\sum_{z'} c_t(x,y,z')}.
\end{align}

\subsubsection{Rates}
We first look at the evolution of $\pi_t(1,1)$, the number of edges between two degree one nodes divided by $n$. Such an edge disappears whenever there is either a service completion or an arrival in either of its nodes. Service completions occur at rate $1$ in each server, while arrivals occur at rate $2\lambda$ in each node. This results in a total death rate of $2+4\lambda$. Creation of pairs $(1,1)$, with an edge between two degree one nodes, can be caused by two types of events: 
\begin{itemize}
    \item First, there could be an arrival where both jobs are assigned to an idle server. This event occurs at rate $\lambda q_t(0)^2$.
    \item Second, it is also possible that a triplet $(u,v,w)$ is such that $d_t(u)=1$ and $d_t(v)=2$ and that the edge that connects $v$ to its other neighbor, say node $w$, disappears due to a service completion. The rate at which this edge $(v,w)$ disappears due to a service completion depends on $d(w)$. If $d(w)=x$, then this rate equals $1/d_t(u) + 1/d_t(w) = 1/2+1/x$.
\end{itemize} 
This yields a total creation rate of pairs $(1,1)$ equal to
\begin{align}
    \lambda q_t(0)^2 + 2 \sum_{x\ge1} c_t(1,2,x) \left( \frac{1}{2} + \frac{1}{x}\right),
\end{align}
where the factor $2$ is due to symmetry.

We now turn our attention to the evolution of $c_t(x,y,z)$ over time. 

(1*) A new triplet is created if a new arrival links two nodes (one of which has a degree larger than $0$). There are two ways to create an $(x,y,z)$ triplet: (1) a degree $x-1$ node connects to a degree $y-1$ node that has a degree $z$ neighbor, or (2) a degree $z-1$ node connects to a degree $y-1$ node that has a degree $x$ neighbor. For case (1) the new arrivals must occur in a degree $x-1$ and $y-1$ node. Moreover we demand that the server selected for the first replica has degree $x-1$ and the server selected for the second replica has degree $y-1$. If the two job replicas are placed in the other order, we state that a type $(z,y,x)$ triplet is created. In this manner the symmetry in the system is preserved. As the degree $y$ node had $y-1$ neighbors and each of these has degree $z$ with probability $c(z|y-1)$ the rate at which a type $(x,y,z)$ triplet is created is given by
\[
\lambda q_t(x-1) q_t(y-1) c(z|y-1) (y-1),
\]
in case (1). For case (2) we state that a type $(x,y,z)$ triplet is created when replica one goes to the degree $y-1$ server and the second to the degree $z-1$ server (otherwise we state that it is a type $(z,y,x)$ triplet). For this case, the rate a which we create type $(x,y,z)$ triplets is therefore given by
\[
\lambda q_t(y-1) q_t(z-1) c(x|y-1) (y-1).
\]

Consider now an existing triplet $(x,y,z)$. 
(2*) There could be an arrival in either one of the three nodes (at rate $2\lambda$). This simply changes the type to $(x+1,y,z)$,
$(x,y+1,z)$ or $(x,y,z+1)$ depending on where the arrival occurs. Arrivals in servers that do not belong to the
triplet $(x,y,z)$ and that share a job with one of the three servers do not affect  a triplet $(x,y,z)$.

(3*) There could be a service completion in either of the three servers (at rate $1$ in each server). 
Assume a service completion occurs
in the first node (with degree $x$), then with probability $1/x$ the job that is shared with the middle node
completes service and the triplet $(x,y,z)$ disappears. Otherwise, with probability $(x-1)/x$, the triplet
changes from type $(x,y,z)$ to $(x-1,y,z)$. The same holds for the other end node. For the middle node
there is a probability of $2/y$ that a job shared with either one of the endpoints completes, making 
the triplet $(x,y,z)$ disappear. Or with probability $(y-2)/y$ the triplet changes into a type $(x,y-1,z)$
triplet.

(4*) Service completions in a server $s$ not belonging to the triplet $(x,y,z)$ that share a job with one of the nodes part of the
triplet do however affect the type of the triplet $(x,y,z)$. Let us first consider the case where $s$ is a neighbor of an end
point of the triplet, say the node with degree $x$ (for $z$ the argument is similar). In this case the degree $v$ of $s$ is
computed as $c_t(v|y,x)$. Notice that {\bf this degree is only an approximation as we assume that it is independent of
the degree of the other endpoint, that is, it is assumed independent of $z$.} As 
the degree $x$ node had $x-1$ neighboring nodes
that are not part of the triplet, the rate of such service completions equals 
\[ \sum_{v \geq 1} c_t(v|y,x) \frac{x-1}{v}.\]
In this case the type $(x,y,z)$ of the triplet becomes $(x-1,y,z)$. The same happens when a service completion
occurs in a neighbor of the endpoint with degree $z$ different from $y$ and the rate is given by
$ \sum_v c_t(v|y,z) \frac{z-1}{v}$, 
and the type becomes $(x,y,z-1)$.

A more difficult case arises when $s$ is a neighbor of the middle node with degree $y$. In order to determine the
probability that $s$ has degree $v$, we should use the fact that it is connected to the middle node of an $(x,y,z)$
triplet. However, in order to define our triplet approximation we only have triplet information.
{\bf We therefore need to define an approximation.} One option would
be to neglect one of the endpoints. For instance we could neglect $z$ and use $c(v|x,y)$. Or we could neglect
the end node with the smallest or largest degree. After experimenting with different such choices, we decided to 
determine the degree $v$ of $s$ as follows. We  count all the triplets of the form $(x,y,v)$ and $(v,y,z)$ and
divide by all the triplets that have a degree $x$ and $y$ nodes as neighbors. In other words, we use $c_t(v|x,y,z)$
which we define as
\begin{align}
    c(v|x,y,z) = \frac{c_t(x,y,v)+c_t(v,y,z)}{\sum_{v'} (c_t(x,y,v')+c_t(v',y,z)) }.
\end{align}
As the middle node has $y-2$ neighbors not belonging to the triplet, the rate is given by
\[ \sum_v c_t(v|x,y,z) \frac{y-2}{v}.\]

\subsubsection{The set of ODEs}
For ease of notation let $c_t(x,y,z)=0$ of $x<1$, $y<2$ or $z<1$.
Given our discussion of the rates in the previous subsubsection, we have
\begin{align}
    \label{eq:triplet_11}
    \frac{d}{dt} \pi_t(1,1) = \lambda q_t(0)^2 -(2+4\lambda) \pi_t(1,1)+  2 \sum_x c_t(1,2,x) \left( \frac{1}{2} + \frac{1}{x}\right),
\end{align}
and
\begin{align}\label{eq:ODE_triplet}
    \frac{d}{dt} c_t(x,y,z) = ~&\lambda q_t(x-1)q_t(y-1)c_t(z|y-1)(y-1) + \lambda q_t(y-1)q_t(z-1)c(x|y-1) (y-1) \nonumber\\
    & -6\lambda c_t(x,y,z) + 2\lambda c_t(x-1,y,z) + 2\lambda c_t(x,y-1,z) +2\lambda c_t(x,y,z-1)\nonumber \\
    &-3 c_t(x,y,z) + c_t(x+1,y,z)\frac{x}{x+1}+ c_t(x,y+1,z)\frac{y-1}{y+1} + c_t(x,y,z+1)\frac{z}{z+1} \nonumber\\
    &-c_t(x,y,z) \left( \sum_v c_t(v|y,x) \frac{x-1}{v} + \sum_v c_t(v|y,z) \frac{z-1}{v} + \sum_v c_t(v|x,y,z) \frac{y-2}{v} \right)\nonumber \\
    &+c_t(x+1,y,z) \sum_v c_t(v|y,x+1) \frac{x}{v}  
    \nonumber \\
    & +c_t(x,y,z+1) \sum_v c_t(v|y,z+1) \frac{z}{v}  +c_t(x,y+1,z) \sum_v c_t(v|x,y+1,z) \frac{y-1}{v} 
\end{align}
where the first two terms are due to (1*), the next four terms are due to case (2*), the next $4$ due to (3*), and the last $4$ due to (4*).

We show in Appendix~\ref{apx:pair from triplet} that one can recover the pair approximation from the triplet approximation, which suggests that the triplet approximation can be regarded as a refinement of the pair approximation that takes more correlations  into account.

\subsubsection{Model Validation} 

In this section we illustrate the accuracy of the triplet approximation. We refer back to Tables \ref{tab:simulPAIR09}, \ref{tab:simulPAIR05} and \ref{tab:simulPAIR07} for the fixed point values obtained by the triplet approximation. The results confirm that the triplet approximation can be regarded as a further refinement of the pair approximation (though not every probability is necessarily closer).  We also illustrate this improvement using Figure \ref{fig:non-asympto-exact} that 
visualizes that the triplet approximation is asymptotically more accurate than the pair approximation. These results suggest that the pair approximation is asymptotically inexact. Similarly we believe that the triplet approximation is also asymptotically inexact and can probably be further refined, for instance using a {\it quadruplet} approximation.  Note that the gain in accuracy of the triplet approximation is, however, minor and as such the pair approximation might suffice to gain insights on the impact of the service discipline.

\begin{figure}[t]
    \centering
    \begin{tabular}{@{}c@{}c@{}c@{}}
        \includegraphics[width=0.33\linewidth]{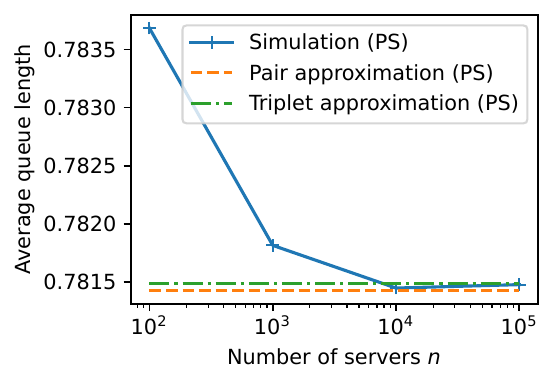}
        &\includegraphics[width=0.33\linewidth]{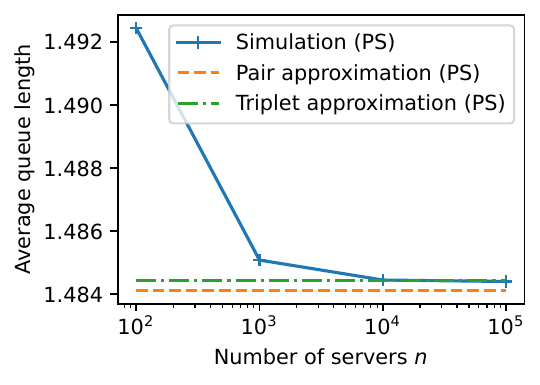}
        &\includegraphics[width=0.32\linewidth]{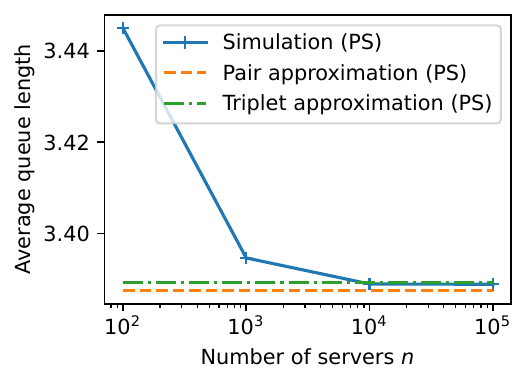}
        \\
        (a) PS, $\lambda=0.5$
        &(b) PS, $\lambda=0.7$
        & (c) PS, $\lambda=0.9$
    \end{tabular}
    \caption{Average queue length as a function of the number of servers $n$: we compare the numbers obtained by simulation (for finite $n$) to the pair and triplet approximations.}
    \label{fig:non-asympto-exact}
\end{figure}

\section{Extension to other service disciplines}
\label{sec:service-disciplines}

We now turn our attention to three other service disciplines: First-Come-First-Served (FCFS), Limited Processor Sharing (LPS) and Last-Come-First-Served (LCFS). In each case, we explain how to construct a pair approximation. The key difference with processor sharing is that one also needs to keep track of the position of the jobs in a queue as the queue length
alone does not suffice.  These approximations collapse to the same mean field approximation when an independence assumption is made.  As before we consider a system with $d=2$,
but these models can be generalized to $d>2$ in a similar fashion as the generalization for
PS servers discussed in Appendix \ref{apx:d>2}. 

\subsection{First-Come-First-Served (FCFS)}\label{sec:FCFS}

\subsubsection{Types of jobs}

Each job is composed of two replicas that we call the first and the second replica. We denote by $2n\delta_t(x_1,y_1,x_2,y_2)$ the number of jobs that have their first replica  in position $x_1$ 
in a queue with length $x_2$ and their second replica in position $y_1$ in a queue with length $y_2$
(or vice versa). Note that $x_1 \leq x_2$ and $y_1 \leq y_2$.
As before, the quantity $\delta_t$ is symmetric (\emph{i.e.}, $\delta_t(x_1,y_1,x_2,y_2)=\delta_t(y_1,x_1,y_2,x_2)$ for all $x_1,x_2,y_1,y_2\ge1$) and we denote by 
$\delta_t(x_1,y_1,x_2)=\sum_{y_2\ge y_1}\delta_t(x_1,y_1,x_2,y_2)$,  $\delta_t(x_1,x_2)=\sum_{y_1\ge1}\delta_t(x_1,y_1,x_2)$, $\delta_t^{(pos)}(x_1)=\sum_{x_2} \delta_t(x_1,x_2)$ and by  $\delta_t^{(ql)}(x_2)=\sum_{x_1} \delta_t(x_1,x_2)$. 

We once more denote by $q_t(x)$ the fraction of the servers holding exactly $x$ replicas. There are different ways to relate $q_t$ and $\delta_t$. In total, there are $2n\delta_t^{(pos)}(x)$ replicas that are residing in position $x$ of a server. Hence, we have: 
\begin{align}\label{eq:qxFCFS}
    q_t(x) = 2\delta_t^{(pos)}(x)- 2\delta_t^{(pos)}(x+1),
\end{align} 
as a queue has length $x$ if there is a replica in position $x$, but no replica in position $x+1$.
An alternate, equivalent way to define $q_t(x)$ based on $\delta_t^{(ql)}(x)$ is to state that
\begin{align}\label{eq:qxFCFS2}
    q_t(x) = \frac{2 \delta_t^{(ql)}(x)}{x},
\end{align} 
as in the PS pair approximation (see \eqref{eq:q_from_pi}).
Finally note that $q_t(x)$ can also be expressed as $2\delta_t(1,x)$ as there is exactly one replica
in position $1$ of any queue with length $x$.

\subsubsection{Rates}

We now look at the evolution of number of jobs of type $(x_1,y_1,x_2,y_2)$. Each job of type $(1,y_1,x_2,y_2)$ or $(x_1,1,x_2,y_2)$ is served at rate $1$, hence we have a death rate of type $(1,y_1,x_2,y_2)$ and $(x_1,1,x_2,y_2)$ jobs equal to $1$.  A new type $(x_1,y_1,x_2,y_2)$ job is created at rate $\lambda q_t(x_2-1)q_t(y_2-1)$
if $x_1=x_2$ and $y_1=y_2$ as any new job joins the back of the queue. This is the birth rate of type $(x_2,y_2,x_2,y_2)$ jobs.

Jobs can also change type.  When using FCFS, a job type can never increase $x_1$ or $y_1$ because future arrivals join at the back of the queue. Future arrivals do increase $x_2$ and $y_2$ as follows.
As any arrival picks two servers at random, any server is subject to an arrival rate of $2\lambda$.
This implies that jobs of type $(x_1,y_1,x_2,y_2)$ change into jobs of type
$(x_1,y_1,x_2+1,y_2)$ at rate $2\lambda$ and similarly into jobs of type $(x_1,y_1,x_2,y_2+1)$
at the same rate of $2 \lambda$.

The state of a job can also change due to service completions.
We distinguish two cases: First, the job at the head of the line can complete service, which occurs at rate $1$. In this case both the position and queue length decrease by one.
Second, a buddy of another job that shares a queue with our tagged job may complete service in some other server. Here we must further differentiate between two possibilities: it can be
the buddy of a job ahead of the tagged job (so the buddy of a job in positions $1$ to $x_1-1$
in the first queue or positions $1$ to $y_1-1$ in the second queue) or a buddy of a job that is 
positioned after the tagged job. In the former case again both the position and queue length
decrease by one, but in the latter case only the queue length decreases by one.

To compute these rate as a function of $\delta_t(x_1,y_1,x_2,y_2)$, we need an \textbf{approximation}. Here, we need the probability that the buddy of the job in position $x'$ of the current queue with length $x_2$ is in position $1$ of its queue. We assume this probability
only depends on $x'$ and $x_2$ and is equal to 
\[\frac{\delta_t(x',1,x_2)}{\delta_t(x',x_2)}\] 
and does not depend on $x_1,y_1$ or $y_2$. In other words we assume this probability only depends on the
position and queue length. 
Summing the two possibilities that decrease both $x_1$ and $x_2$ implies that the rate
of change from $(x_1,y_1,x_2,y_2)$ for $x_1 > 1$ to $((x_1-1,y_1,x_2-1,y_2)$ is equal to 
\begin{align*}
    1+ \sum_{x' =1}^{x_1-1} \frac{\delta_t(x',1,x_2)}{\delta_t(x',x_2)},
\end{align*}
for all $x_1\ge2$, while the case that only decreases $x_2$ by one has rate
\begin{align*}
\sum_{x' =x_1+1}^{x_2} \frac{\delta_t(x',1,x_2)}{\delta_t(x',x_2)},
\end{align*}
For further use let 
$\kappa_t(x_a,x_b)= \sum_{x'=1}^{x_a} \frac{\delta_t(x',1,x_b)}{\delta(x',x_b)}$. 
The above two rates can be denoted as $1+\kappa_t(x_1-1,x_2)$ and
$\kappa_t(x_2,x_2)-\kappa_t(x_1,x_2)$.
The situation is symmetric for $y$.

\subsubsection{The set of ODEs}
The above rates lead to the following set of ODEs for $\delta_t(x,y)$, where
$\delta_t(x_1,y_1,x_2,y_2)=0$ if one of its entries equals zero or $x_1 > x_2$ or $y_1 > y_2$:
\begin{align}\label{eq:ODEFCFS}
    \frac{d}{dt} \delta_t(x_1,y_1,&x_2,y_2)= \lambda q_t(x_2-1) q_t(y_2-1) 
    1[x_1=x_2, y_1=y_2]- 4\lambda \delta_t(x_1,y_1,x_2,y_2) \nonumber \\&+
    2\lambda\delta_t(x_1,y_1,x_2-1,y_2) + 2\lambda \delta_t(x_1,y_1,x_2,y_2-1) \nonumber\\
    &-\delta_t(x_1,y_1,x_2,y_2) 
    \left[2+ \kappa_t(x_2,x_2) - \frac{\delta_t(x_1,1,x_2)}{\delta_t(x_1,x_2)}
    + \kappa_t(y_2,y_2) - \frac{\delta_t(y_1,1,y_2)}{\delta_t(y_1,y_2)}\right] 
    \nonumber \\ 
    &+\delta_t(x_1+1,y_1,x_2+1,y_2) 
    \left[1+ \kappa_t(x_1,x_2+1) \right] \nonumber \\
    &+\delta_t(x_1,y_1,x_2+1,y_2) 
    \left[\kappa_t(x_2+1,x_2+1) -\kappa_t(x_1,x_2+1) \right] \nonumber \\
    & +\delta_t(x_1,y_1+1,x_2,y_2+1) 
    \left[1+ \kappa_t(y_1,y_2+1) \right] \nonumber \\
    &+\delta_t(x_1,y_1,x_2,y_2+1) 
    \left[\kappa_t(y_2+1,y_2+1) -\kappa_t(y_1,y_2+1) \right].
\end{align}

\subsubsection{Model Validation}

Tables \ref{tab:simulPAIR09FCFS}, \ref{tab:simulPAIR05FCFS} and \ref{tab:simulPAIR07FCFS} correspond to
Tables \ref{tab:simulPAIR09}, \ref{tab:simulPAIR05} and \ref{tab:simulPAIR07}, except that the PS service
discipline has been replaced by FCFS servers. \red{An additional line has also
been inserted that correspond to $E[T] = (\ln (1/(1-\lambda)) - \lambda)/\lambda^2$ which is an approximation of the mean queue length for Redundancy-$2$ with FCFS servers presented in \cite{gardner2017redundancy} and shown to be asymptotically exact in \cite{shneer2020large}.}
These tables demonstrate that the pair approximation
for FCFS servers drastically reduces the error of the mean field approximation and yields errors
below $0.1\%$, except for the last probability. They do not provide the value for $E[T]$ as \cite{gardner2017redundancy} and therefore are not asymptotically exact. The advantage of our method compared to \cite{gardner2017redundancy} is to provide an approximation of the full queue length distribution.

\begin{table}[t]
    \centering
    \begin{tabular}{|r|c|c|c|c|c|c|c|c|}
    \hline 
        Number of jobs & mean & 1 & 2 & 3 & 5 & 10\\ \hline
         \hline
         Simul $n=10^2$&    3.1793 &  1.4732e-01&   1.7636e-01 &  1.7481e-01 &  1.0716e-01   &4.4893e-03\\
 $n=10^3$&  3.1227   &1.4872e-01 &  1.7950e-01  & 1.7840e-01 &  1.0709e-01  & 3.4895e-03\\
    $n=10^4$&3.1171   &1.4884e-01 &  1.7983e-01  & 1.7879e-01 &  1.0707e-01  & 3.3952e-03\\
   $n=10^5$& 3.1171   &1.4884e-01 &  1.7983e-01  & 1.7879e-01 &  1.0707e-01  & 3.3952e-03\\\hline
  \red{2$\lambda E[T]$ \cite{gardner2017redundancy}}&  \red{3.1169}   & - &  -  & -&  -  & - \\ \hline
   Pair Approx & 3.0965   &1.4921e-01 &  1.8097e-01  & 1.8031e-01 &  1.0706e-01  & 3.0484e-03\\\hline
  Mean Field &  3.3377   &1.4177e-01 &  1.6578e-01  & 1.6496e-01 &  1.0950e-01  & 6.8438e-03\\
         \hline
    \end{tabular}
    \caption{Accuracy of Mean Field and Pair Approximation for FCFS servers, arrival rate $\lambda=0.9$. }
    \label{tab:simulPAIR09FCFS}
\end{table}

\subsection{Limited Processor Sharing (LPS)}

In this subsection we generalize the FCFS model to a model for limited processor sharing LPS($K$) where $K$ is the
maximum number of jobs that are served in parallel by a single server. \red{LPS(K) works as follows: When there are $K$ or more
$K$ jobs in the queue, the oldest $K$ are served in parallel (in a PS fashion), otherwise they are all served in parallel (as for PS).} The model reduces to the FCFS model of Section~\ref{sec:FCFS} when $K=1$, and to the pair approximation for PS of Section \ref{sec:PSpair} when $K=\infty$.

\subsubsection{Types of jobs and rates}

The type of a job is represented as $(x_1,y_1,x_2,y_2)$, where $x_2$ and $y_2$ are the queue lengths
of the servers holding the two replicas for the job as in the FCFS model. However, $x_1$ now equals
$1$ if the first replica is being served (meaning it is in one of the first $K$ positions),
while $x_1 > 1$ if the replica is in position $K+x_1-1$, which means it is not in service.
The same holds for $y_1$ and replica $2$. Remark that when $x_1 > 1$ (or $y_2 > 1$), 
then $x_2 > K$ (or $y_2>K$).
If we set $K=\infty$ then both $x_1$ and $y_1$
always equal one and the type of a job is simply characterized by the queue lengths $(x_2,y_2)$ as for the PS pair approximation. 

Let $\mu_t(x_1,y_1,x_2,y_2)$ be the number of type $(x_1,y_1,x_2,y_2)$ jobs scaled by the number of
servers at time $t$. Define $\mu_t(x_1,x_2) = \sum_{y_1,y_2} \mu_t(x_1,y_1,x,y_2)$
and $\mu_t(x)=\sum_{x_1} \mu_t(x_1,x)$.
The queue length distribution can be expressed in terms of the $\mu_t(x)$ 
\begin{align}\label{eq:qxLPS}
    q_t(x) = \frac{2 \mu_t(x)}{x},
\end{align} 
using the same reasoning as in \eqref{eq:q_from_pi} and \eqref{eq:qxFCFS2}.

New arrivals increase the $x_2$ or $y_2$ value of a job by one, where each server is subject to new arrivals at rate
$2\lambda$. A new job is also created when an arrival occurs. The difference with the
FCFS model is that the $x_1$ and $y_1$ values do not match $x_2$ and $y_2$, respectively,
when a new job is created. Let $[x]^+ = \max(0,x)$, then $x_1$ equals $1+[x_2-K]^+$
if a new job joins a queue with length $x_2$ (such that $x_1$ equals $1$ if $x_2 \leq K$). Similarly $y_1$ equals $1+[y_2-K]^+$. This yields the first two lines in \eqref{eq:ODELPS}.

Service completions occur in both servers at rate $1$. Assume the service completion
occurs in the server holding the first replica. If $x_1=1$ the tagged job completes service
with probability $p^K_1(x_2)=\min((x_2-1)/x_2,(K-1)/K)$, otherwise the tagged job only changes
type (that is, $x_1$ and $x_2$ decrease by one). This explains lines $3$ and $4$ in \eqref{eq:ODELPS}.

Job types can also change due to a service completion of a buddy of a job that shares a 
server with the tagged job. To determine the rate at which these service completions occur
we need to introduce an {\bf approximation}. We focus on the change in $x_1$ and $x_2$ as the
same holds for $y_1$ and $y_2$ due to the symmetry of the model. Consider the buddy of the job
that is not our tagged job.  Let $x'= 1$ if this job is in one of the first $K$ positions 
and let $x'>1$ if this job is in position $K+x'-1$. The approximation exists in assuming that the probability that this buddy is in service in a server with length $y'$ is given by 
$\mu_t(x',1,x_2,y')/\mu_t(x',x_2)$. In other words, the probability that the buddy of the job  is in service in a queue of length $y'$ depends only on $x_2$ and the value of $x'$ (which is fully determined by its
position). If a buddy is in one of the first $K$ positions of a server with queue length $y'$,
then it is served at rate $1/\min(K,y')$. Hence the completion rate of the buddy of $x'$ is
\begin{align}\label{eq:buddyrate}
    r_t^K(x',x_2) = \sum_{y'} \frac{\mu_t(x',1,x_2,y')}{\mu_t(x',x_2) \min(K,y')}.
\end{align} When $x_2 \leq K$ the total rate at which buddies of jobs other that the tagged job complete
service is thus $\min(K-1,x_2-1)r_t^K(1,x_2)$. When $x_2$ exceeds $K$, we find
that this rate is given by
\[ K r_t^K(1,x_2) + \sum_{x'=2}^{x_2-K+1} r_t^K(x',x_2) - r_t^K(x_1,x_2),\]
where the tagged job is of type $(x_1,y_1,x_2,y_2)$.
Similar to the $\kappa_t(x_a,x_b)$ values for the FCFS model, we now define the $\psi_t(x_a,x_b)$
values, that is,
\begin{align}
    \psi^K_t(1,x_b) &= \min(K,x_b) r_t^K(1,x_b) \\
     \psi^K_t(x_a,x_b) &= K r_t^K(1,x_b) + \sum_{x'=2}^{x_a} r_t^K(x',x_b). 
\end{align}
Using this notation, the total rate at which buddies of the jobs other than the tagged job
complete service can be stated as
$\psi^K_t(1,x_2) - r_t^K(1,x_2)$ when $x_2 \leq K$ and $\psi^K_t(x_2-K+1,x_2)-r_t(x_1,x_2)$
otherwise. This explains line $5$ in \eqref{eq:ODELPS}.

When such a buddy completes service, $x_2$ always decreases by one.
However the value of $x_1$ only decreases by one if the value of $x'$ is strictly less
than $x_1$ (which can only occur when the tagged job is not in service). 
In other words, the total rate at which $x_2$ and $x_1$ both decrease by one is given by
$\psi^K_t(x_1-1,x_2)$, yielding line $6$ in \eqref{eq:ODELPS}.
Finally, when $x_1=1$ or $x' > x_1$ only $x_2$ decreases by one. For $x_1=1$ the total rate
at which  $x_2$ decreases by one due to a service completion of a buddy
can be written as $\psi_t^K(1+[x_2-K]^+,x_2)-r_t^K(1,x_2)$. 
For $x_1 > 1$ only the buddies of the job after position $K+x_1-1$ decrease $x_2$ by
one without decreasing $x_1$, meaning we have a total rate given by
$\psi_t^K(x_2-K+1,x_2)-\psi_t^K(x_1,x_2)$. This yields the last four lines in \eqref{eq:ODELPS}.

\subsubsection{The set of ODEs}
The above rates lead to the following set of ODEs for $\mu_t(x_1,y_1,x_2,y_2)$, where
$\mu_t(x_1,y_1,x_2,y_2)=0$ if one of its entries equals zero or $x_1 > x_2 - K+1$ or $y_1 > y_2-K+1$:
\begin{align}\label{eq:ODELPS}
    \frac{d}{dt} \mu_t(&x_1,y_1,x_2,y_2)= \lambda q_t(x_2-1) q_t(y_2-1) 
    1[x_1=1+[x_2-K]^+, y_1=1+[y_2-K]^+]\nonumber \\&
    - 4\lambda \mu_t(x_1,y_1,x_2,y_2) +2\lambda\mu_t(x_1,y_1,x_2-1,y_2) + 2\lambda \mu_t(x_1,y_1,x_2,y_2-1) \nonumber\\
    &-2\mu_t(x_1,y_1,x_2,y_2) +\mu_t(x_1+1,y_1,x_2+1,y_2) + \mu_t(x_1,y_1+1,x_2,y_2+1)
    \nonumber \\
    &+1[x_1=1]\mu_t(1,y_1,x_2+1,y_2) p_1^K(x_2+1) 
    +1[y_1=1]\mu_t(x_1,1,x_2,y_2+1) p_1^K(y_2+1) \nonumber \\ 
    &-\mu_t(x_1,y_1,x_2,y_2) 
    \left[\psi^K_t(1+[x_2-K]^+,x_2) - r_t^K(x_1,x_2) + \psi^K_t(1+[y_2-K]^+,y_2) - r_t^K(y_1,y_2)\right] 
    \nonumber \\ 
    & +\mu_t(x_1+1,y_1,x_2+1,y_2) \psi^K_t(x_1,x_2+1) +
     +\mu_t(x_1,y_1+1,x_2,y_2+1) \psi^K_t(y_1,y_2+1) 
  \nonumber \\
    &+1[x_1=1] \mu_t(1,y_1,x_2+1,y_2) (\psi_t^K(1+[x_2-K+1]^+,x_2+1)-r_t^K(1,x_2+1))
    \nonumber \\
    &+1[y_1=1] \mu_t(x_1,1,x_2,y_2+1) (\psi_t^K(1+[y_2-K+1]^+,y_2+1)-r_t^K(1,y_2+1))
    \nonumber \\
    &+1[x_1 > 1]\mu_t(x_1,y_1,x_2+1,y_2) 
    \left[\psi^K_t(x_2-K+2,x_2+1) -\psi^K_t(x_1,x_2+1) \right] \nonumber \\
    &+1[x_1 > 1]\mu_t(x_1,y_1,x_2,y_2+1) 
    \left[\psi^K_t(y_2-K+2,y_2+1) -\psi^K_t(y_1,y_2+1) \right].
\end{align}
As with the FCFS pair approximation (see Appendix \ref{apx:moreFCFS}) it is possible to show that $q(0)$ equals $1-\lambda$ in any fixed
point and that this ODE collapses to \eqref{eq:indepq} when using an independence assumption.

\subsubsection{Model Validation}
Tables \ref{tab:simulPAIR09LPS_K2}, \ref{tab:simulPAIR05LPS_K2} and \ref{tab:simulPAIR07LPS_K2} correspond to
Tables \ref{tab:simulPAIR09}, \ref{tab:simulPAIR05} and \ref{tab:simulPAIR07}, where the PS servers
are replaced by LPS(K) servers with $K=2$. These tables demonstrate that the pair approximation
also works nicely for LPS($2$) servers. We obtained similar results for larger $K$ values, that is,
for $K=3,5$ and $10$ (not shown in the paper). These additional experiments confirm that as 
$K$ increases the queue length distribution of the pair approximation for LPS($K$) converges towards the distribution of the PS pair approximation.

\begin{table}[t]
    \centering
    \begin{tabular}{|r|c|c|c|c|c|c|c|c|}
    \hline 
        Number of jobs & mean & 1 & 2 & 3 & 5 & 10\\ \hline
         \hline
Simul $n=10^2$&3.2880   &1.4309e-01  & 1.6921e-01  & 1.6859e-01  & 1.0905e-01  & 6.0308e-03\\
$n=10^3$&   3.2256 &  1.4491e-01&   1.7221e-01 &  1.7171e-01&   1.0903e-01&   4.8872e-03\\
$n=10^4$&   3.2243 &  1.4478e-01&   1.7226e-01 &  1.7194e-01&   1.0919e-01&   4.8114e-03\\
$n=10^5$&   3.2243 &  1.4478e-01&   1.7226e-01 &  1.7194e-01&   1.0919e-01&   4.8114e-03\\\hline
Pair Approx.&   3.2166 &  1.4484e-01&   1.7258e-01 &  1.7248e-01&   1.0932e-01&   4.6412e-03\\\hline
Mean Field&   3.3377 &  1.4177e-01&   1.6578e-01&   1.6496e-01&   1.0950e-01&   6.8438e-03\\
         \hline
    \end{tabular}
    \caption{Accuracy of Mean Field and Pair Approximation for LPS(2) servers, arrival rate $\lambda=0.9$. }
    \label{tab:simulPAIR09LPS_K2}
\end{table}

\subsection{Last-Come-First-Served (LCFS)}

The last service discipline that we consider is last-come-first-served (LCFS).

\subsubsection{Types of jobs and rates}

The analysis for LCFS proceeds similarly to the one of FCFS,. We now use the value $\gamma_t(x_1,y_1,x_2,y_2)$ that is equivalent to the value $\delta_t(x_1,y_1,x_2,y_2)$ for FCFS except that position $1$ now corresponds to the back of the queue. As for FCFS, we denote by $\gamma_t(x_1,y_1,x_2)=\sum_{y_2\ge y_1}\gamma_t(x_1,y_1,x_2,y_2)$,  $\gamma_t(x_1,x_2)=\sum_{y_1\ge1}\gamma_t(x_1,y_1,x_2)$, $\gamma_t^{(pos)}(x_1)=\sum_{x_2} \gamma_t(x_1,x_2)$ and by  $\gamma_t^{(ql)}(x_2)=\sum_{x_1} \gamma_t(x_1,x_2)$. 
The queue length can be expressed as
$q_t(x)=2(\gamma_t^{(pos)}(x)-\gamma_t^{(pos)}(x+1))$
or equivalently as $q_t(x)=2\gamma_t^{(ql)}(x)/x$ or $q_t(x)=2\gamma_t(1,x)$.

At rate $1$ a job of type $(1,y_1,x_2,y_2)$ or $(x_1,1,x_2,y_2)$ is served, hence we have a death rate equal to $1$ for these job types. The creation of jobs is different now, as when a job is created, this creates a new type $(1,1,x_2,y_2)$ job at rate $\lambda q_t(x_2-1)q_t(y_2-1)$.

Jobs can also change type. Contrary to FCFS, a job type increases in all its entries when there is an arrival:  a type $(x_1,y_1,x_2,y_2)$ job becomes a type
$(x_1+1,y_1,x_2+1,y_2)$ or $(x_1,y_1+1,x_2,y_2+1)$ whenever an arrival occurs that places a replica at one of the servers of the tagged job. Each such event occurs at rate $2\lambda$.   
The changes due to a service completion in one of the two servers of the tagged job are the
same as in the FCFS case. The same holds for job type changes that occur when a buddy of 
another job present in the servers of the tagged job completes service.

\subsubsection{The set of ODEs}

Given the above discussion on the rates, we find that the set of ODEs given by \eqref{eq:ODEFCFS}
can be adapted to the following set of ODEs for $\gamma_t(x_1,y_1,x_2,y_2)$:
\begin{align}\label{eq:ODELCFS}
    \frac{d}{dt} \gamma_t(&x_1,y_1,x_2,y_2)= \lambda q_t(x_2-1) q_t(y_2-1) 
    1[x_1=1, y_1=1]- 4\lambda \gamma_t(x_1,y_1,x_2,y_2) \nonumber \\&+
    2\gamma_t(x_1-1,y_1,x_2-1,y_2) + 2\lambda \gamma_t(x_1,y_1-1,x_2,y_2-1) \nonumber\\
    &-\gamma_t(x_1,y_1,x_2,y_2) 
    \left[2+ \phi_t(x_2,x_2) - \frac{\gamma_t(x_1,1,x_2)}{\gamma_t(x_1,x_2)}
    + \phi_t(y_2,y_2) - \frac{\gamma_t(y_1,1,y_2)}{\gamma_t(y_1,y_2)}\right] 
    \nonumber \\ 
    &+\gamma_t(x_1+1,y_1,x_2+1,y_2) 
    \left[1+ \phi_t(x_1,x_2+1) \right] +\gamma_t(x_1,y_1+1,x_2,y_2+1) 
    \left[1+ \phi_t(y_1,y_2+1) \right]\nonumber \\
    &+\gamma_t(x_1,y_1,x_2+1,y_2) 
    \left[\phi_t(x_2+1,x_2+1) -\phi_t(x_1,x_2+1) \right] \nonumber \\
    &+\gamma_t(x_1,y_1,x_2,y_2+1) 
    \left[\phi_t(y_2+1,y_2+1) -\phi_t(y_1,y_2+1) \right],
\end{align}
where $\phi_t(x_a,x_b)=\sum_{x'=1}^{x_a} \gamma_t(x',1,x_b)/\gamma_t(x',x_b)$ has the same
form as $\kappa_t(x_a,x_b)$ for FCFS.
By summing this over $y_1,y_2,x_1$ and $x_2$ we obtain the same 
ODE  $\frac{d q_t}{dt} = 2\lambda - 2(1-q_0(t))$ for the mean queue length as in the FCFS model,
while \eqref{eq:indepq} can be recovered from \eqref{eq:ODELCFS} if we make a similar
independence assumption $\gamma_t(x_1,y_1,x_2,y_2)\approx \frac{2\gamma_t(x_1,x_2)\gamma_t(y_1,y_2)}{\bar q_t},$ as in the FCFS case.

\subsubsection{Model Validation}

The pair approximation for LCFS servers is validated using 
Tables \ref{tab:simulPAIR09LCFS}, \ref{tab:simulPAIR05LCFS} and \ref{tab:simulPAIR07LCFS}.
While the pair approximation clearly improves upon the mean field approximation, the
accuracy in case of LCFS servers is lower than in the setting with PS, FCFS or LPS($K$) servers.
This is somewhat expected as the correlation between the positions of two replicas of the same job is
strong in case of a LCFS server as both replicas start in position one. This is in contrast
to FCFS servers, where the positions are initially uncorrelated, or to PS servers, where the positions
are irrelevant.

\begin{table}[t]
    \centering
    \begin{tabular}{|r|c|c|c|c|c|c|c|c|}
    \hline 
        Number of jobs & mean & 1 & 2 & 3 & 5 & 10\\ \hline
         \hline
Simul $n=10^2$&   3.9944 & 1.2787e-01 &  1.3872e-01 &  1.3535e-01 &  1.0262e-01 &  2.0748e-02\\
   $n=10^3$& 3.9484 &  1.2803e-01&   1.3949e-01&   1.3658e-01&   1.0391e-01&   1.9867e-02\\
   $n=10^4$& 3.9415 &  1.2815e-01&   1.3963e-01&   1.3675e-01&   1.0404e-01&   1.9733e-02\\
   $n=10^5$& 3.9415 &  1.2815e-01&   1.3963e-01&   1.3675e-01&   1.0404e-01&   1.9733e-02\\\hline
  Pair Approx. & 3.6966 &  1.3135e-01&   1.4699e-01&   1.4635e-01&   1.0916e-01&   1.3939e-02\\\hline
  Mean Field & 3.3377 &  1.4177e-01&   1.6578e-01&   1.6496e-01&   1.0950e-01&   6.8438e-03\\
         \hline
    \end{tabular}
    \caption{Accuracy of Mean Field and Pair Approximation for LCFS servers, arrival rate $\lambda=0.9$. }
    \label{tab:simulPAIR09LCFS}
\end{table}

\section{Influence of the service disciplines}
\label{sec:numerical}

In this section, we provide numerical results to compare the performance of the various scheduling disciplines. \red{Unless specified otherwise, all reported numbers are obtained by computing the fixed points of the pair-approximations provided in Sections~\ref{sec:PS-approx} and \ref{sec:service-disciplines}. These fixed points are computed by numerically integrating the corresponding systems of ODEs.}

\subsection{Influence on the queue length distribution}

The previous sections show that our approximations provide accurate estimates of the queue length distributions and show that the service discipline has a true impact on the average queue length of the system. To push further the comparison between the various service disciplines, we show in Figure~\ref{fig:comparison_distrib} the queue length distribution for the various service disciplines.  These results show that when the load is not too high ($\lambda=0.7$), all policies behave 
similarly. This is because when $\lambda$ is small, many jobs have at least one of their replicas that is alone in a server where the service discipline has no impact. As $\lambda$ goes to $1$, most servers have a large queue length, and the service discipline has a high impact. This figure shows that the queue length distribution is minimal for FCFS, and maximal for LCFS, with PS and LPS(2) being in the middle.

\begin{figure}[t]
    \centering
    \begin{tabular}{ccc}
        \includegraphics[width=0.3\linewidth]{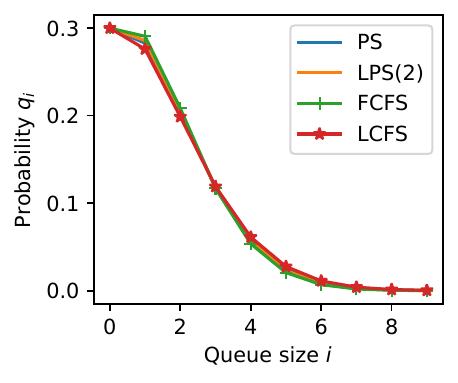}
        &\includegraphics[width=0.3\linewidth]{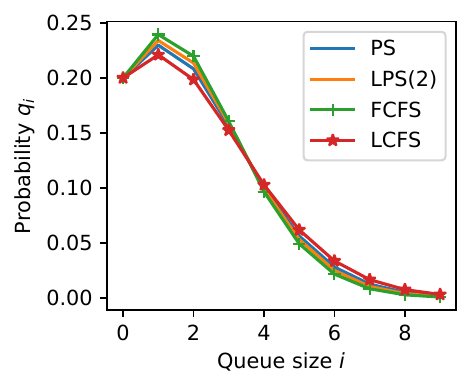}
        &\includegraphics[width=0.3\linewidth]{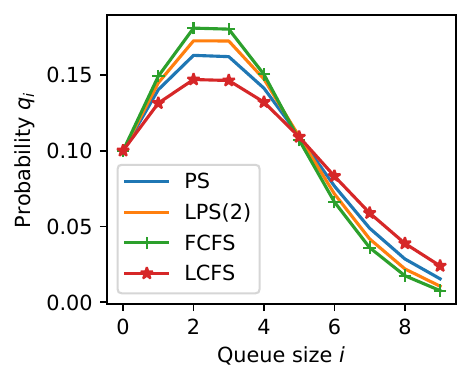}\\
        (a) $\lambda=0.7$
        &(b) $\lambda=0.8$
        &(c) $\lambda=0.9$
    \end{tabular}
    \caption{Queue length distributions of the various policies (computed by the pair approximations).}
    \label{fig:comparison_distrib}
\end{figure}

\subsection{Impact on the average queue length}

To study further this dependence on $\lambda$, we plot in Figure~\ref{fig:comparison_queuelength}, the average queue length as a function of $\lambda$. In Figure~\ref{fig:comparison_queuelength}(a), we compare the three policies PS, LCFS and FCFS. This figure confirms what is observed on Figure~\ref{fig:comparison_distrib}: FCFS always provides the smallest queue length and the difference is larger for larger loads. This panel does not show the numbers for LPS to improve readability. In Figure~\ref{fig:comparison_queuelength}(b), we quantify this difference by showing the ratio between the average queue length of each of the four policies (PS, LPS(2), LPS(5) and LCFS) and the one of FCFS. All numbers are higher than $1$ because FCFS has the smallest average queue length. As expected, LPS(2) has a behavior almost identical to FCFS (less than $5\%$ of difference in terms of average queue length, even for $\lambda=0.95$) whereas LCFS can lead to a queue length up to $25\%$ higher when $\lambda=0.95$. 

\red{To illustrate the accuracy of our approximations:
\begin{itemize}
    \item In Figure~\ref{fig:comparison_queuelength}(a), we also add simulation results for $\lambda\in\{0.3, 0.5, 0.7, 0.9\}$ and $n=10^6$ servers, as well as the approximation of the average queue length provided in \cite{gardner2017redundancy}. These numbers are -- qualitatively and quantitatively -- close to our pair approximation.
    \item In Figure~\ref{fig:comparison_queuelength}(b), the value for FCFS that we use as a denominator is our pair-approximation for FCFS. We also report a curve that is the ratio of the approximation for FCFS provided in \cite{gardner2017redundancy} (that is asymptotically exact) to our pair-approximation. It shows that the difference between the two approximations is smaller than $1\%$ for $\lambda\le0.95$.
\end{itemize}
Note that the approximation provided in \cite{gardner2017redundancy} is only for the average queue length, and only for FCFS. This explains why we cannot make a similar comparison when looking at queue length distribution or at other service disciplines. 
}

\begin{figure}[t]
    \centering
    \begin{tabular}{@{}cc@{}}
        \includegraphics[width=0.43\linewidth]{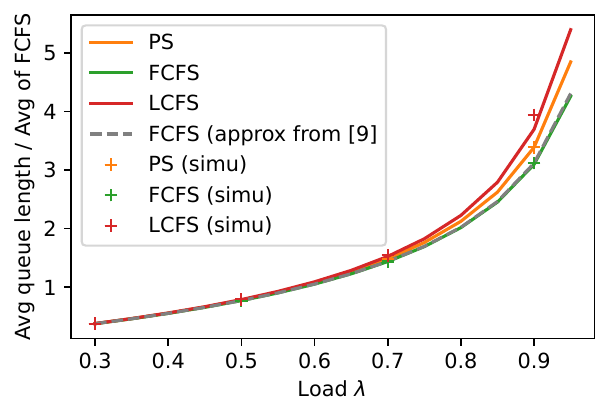}
        &\includegraphics[width=0.45\linewidth]{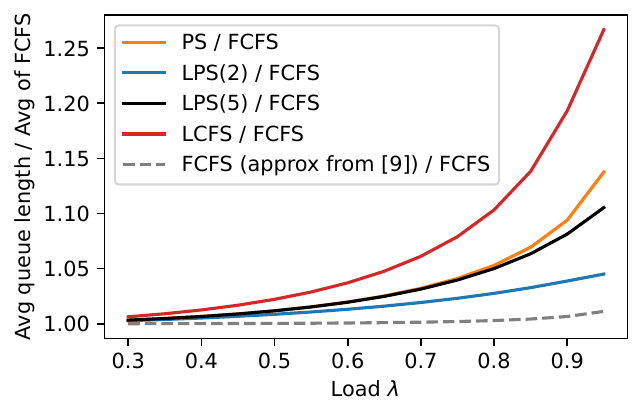}\\[-2pt]
        (a) Average queue length
        & (b) Ratio with respect to FCFS. 
    \end{tabular}
    \caption{Average queue length as a function of $\lambda$. \red{Unless specified otherwise, all numbers are computed by using the pair approximations that we derive in the paper.}.}
    \label{fig:comparison_queuelength}
\end{figure}

\subsection{Correlations between the buddies of a job}
\label{sec:correlations}

If the pair approximation provides accurate estimates of the queue length distribution, it can also be used to study the correlations between the two buddies of a jobs. This would not be possible by using a mean field approximation that precisely acts as if replicas were independent.  

In Figure~\ref{fig:correlations}, we study the correlations between the two buddies of a job for the three scheduling policies PS, FCFS and LCFS. For each policy, we plot the rate at which the buddy replica disappears as a function of the state of the first replica: For PS, the $x$-axis corresponds to the queue length of the first replica and the curve corresponds to $x\mapsto h(x)$. For FCFS or LCFS, it corresponds either to the position $x_1$ or to the queue length $x_2$, \emph{i.e.}, 
\begin{align*}
    \text{(Dashed) green curve: }x_1\mapsto \frac{\sum_{x_2,y_2} \delta(x_1, 1, x_2, y_2)}{\sum_{x, x_2,y_2} \delta(x_1, x, x_2, y_2)};
    &\quad \text{Orange curve: }
    x_2\mapsto\frac{\sum_{x_1,y_2} \delta(x_1, 1, x_2, y_2)}{\sum_{x, x_1,y_2} \delta(x_1, x, x_2, y_2)}.
\end{align*}
(or the same quantity with $\gamma$ for LCFS). 

We observe that for PS, the influence of the queue length is relatively limited (rates vary from $2.55$ to $2.85$), which explains why the performance of PS is close to the one of the independent model. FCFS is the only policy for which the rate at which the buddy disappears increases with the queue length of the first buddy. This could explain why FCFS helps reducing the queue length.

\begin{figure}[t]
    \begin{tabular}{ccc}
        \includegraphics[width=0.3\linewidth]{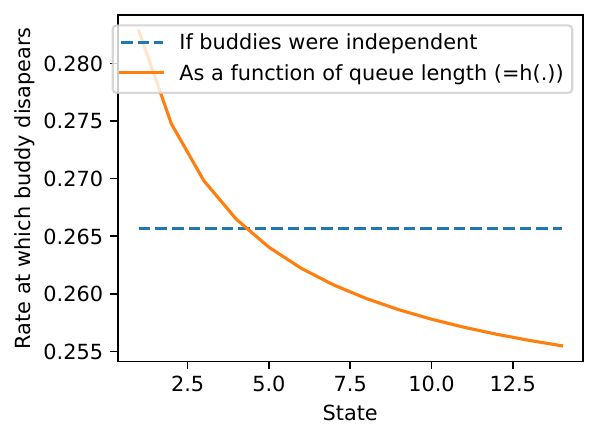}
        &\includegraphics[width=0.3\linewidth]{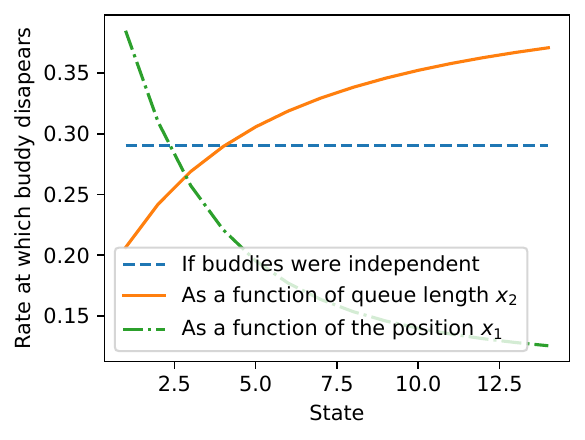}
        &\includegraphics[width=0.3\linewidth]{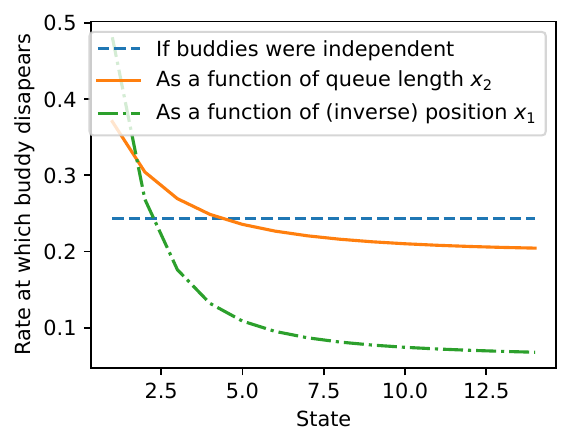}\\
        (a) PS& (b) FCFS & (c) LCFS
    \end{tabular}
    \caption{Rate at which the buddy replica disappears as a function of the state of the first job (as a function of the queue length for PS and of the position of the job or the queue length for LCFS or LCFS).}
    \label{fig:correlations}
\end{figure}

\section{Conclusions and future work}\label{sec:conclusions}
In this paper we presented the first approximation methods to study the impact of the service discipline on the queue length distribution in large queueing systems with redundancy. Although our models were shown to be highly accurate, they are not asymptotically exact. This work triggers two important lines of future work. First, from an application point of view, the next step is to use these models to study the impact of the token selection mechanism in Join-Idle-Queue load balancing with redundancy. From a more theoretical point of view, the dynamic random graph model introduced in Section~\ref{sec:PS-approx} can be viewed as a dynamic version of the classical Erdös-Renyi graph where we add arrivals and departures. Our work provides some insight on how to study the structural properties of this graph for large $n$. 

\begin{acks}
    This work is supported by the French National Research Agency (ANR) through REFINO Project under Grant ANR-19-CE23-0015 and by FWO (Fonds voor Wetenschappelijk Onderzoek).
\end{acks}

\bibliographystyle{acm}
\bibliography{thesis}


\appendix 

\section{Link with Join-Idle-Queue load balancing with redundancy}
\label{apx:JIQ-d}

While we believe that studying redundancy systems with different service disciplines is important on its own, we provide additional
motivation for the specific model with exponential job sizes and i.i.d.~replicas in this section. More specifically,
we argue that our redundancy models can be used to study the so-called Join-Idle-Queue (JIQ) load balancing policy with redundancy
as explained below.  In fact, this was the original problem that triggered us to look at these redundancy models.

Under the traditional JIQ policy \cite{lu1,stolyar1} each dispatcher maintains a list of servers ids that are considered to
be idle. When a job arrives at a dispatcher and the list is non-empty, the dispatcher assigns the incoming job to one
of the servers in the list and removes this server id from the list. Notice that several policies can be used to pick the server
id from the list. As will become clear further on, this policy will correspond to the service discipline in our redundancy queue,
for instance, we get a PS server if the id is selected at random.
Servers that become idle when the traditional JIQ is used, select a {\it single} dispatcher at random and their id is added to the list of the selected dispatcher. 

Clearly, under high loads there are very few idle servers and many dispatchers may have an idle
list. When a new job arrives at a dispatcher with an idle list, the dispatcher assigns the job at random. In  such case there
are two flavors of JIQ: {\it with and without withdrawals} \cite{wangJIQ, myMASCOTSpaper}. Without withdrawals an idle server will not remove its id from the
list of a dispatcher when it receives a job from another dispatcher. In this case it is shown in \cite{mitzenmacherJIQ}
that the policy used by the dispatcher to select a server id from its list impacts performance. When withdrawals are used,
servers remove their id from the list when receiving a job form another dispatcher. This means that any server id that
is on the list of a server necessarily corresponds to an idle server. In this case the policy used by the
dispatcher to select a server id from its list has no impact on the performance \cite{myMASCOTSpaper}.

When the load is high a possible approach to improve the performance of JIQ is to use a threshold such that when a
server has a queue length below the threshold, it adds its server id to the list of a randomly selected dispatcher \cite{mitzenmacherJIQ}.
Another possibility \cite{Harte} is that a server places its id at multiple, say $d$, randomly selected 
dispatchers when becoming idle. As soon as a server receives a job from any of these $d$ servers, it removes its id from all of the
other $d-1$ servers. This policy is called JIQ with redundancy, abbreviated as JIQ-Red($d$). It turns out that
under JIQ-Red($d$) the order in which the server selects a server id from its list again impacts performance.
It is however intuitively not clear which policy works best: random selection, FCFS, LCFS, etc? 

To see how this JIQ-Red($d$) problem is related to the model considered in this paper, think of the 
dispatcher as the queues, and the lists of server ids as the jobs in the queue.
Arrivals correspond to servers becoming idle. In such an event the server places its id at $d$ dispatchers. This corresponds
to the redundancy $d$ aspect of the queueing system. The job arrivals at the dispatchers correspond to service completions,
the manner in which the server id is selected to the service discipline and the removal by the server of its id from
the $d-1$ other servers corresponds to the cancellation-on-completion. As incoming jobs in the JIQ-Red($d$) are assumed
to correspond to a Poisson process, the corresponding service times are indeed exponential and different replicas have
independent sizes as each dispatcher is subject to its own independent Poisson process. 

There is however one difference between our queueing model and the JIQ-Red($d$) system. Under JIQ-Red($d$) a server can
also receive a job from a dispatcher that has an empty server id list. In this case the server will remove its id from
the $d$ dispatchers holding its id. Such an event can be incorporated in our model by adding exponential job abandonment.
In fact it is not hard to include abandonment in our redundancy models, but we decided to exclude them the keep the
presentation cleaner. We should also note that some of the rates in the JIQ-Red($d$) model are not known explicitly
(such as the abandonment rate), but this is something that can be dealt with by solving a sequence of models with fixed input parameters (as in our model).
In short, developing a means to analyze the impact of the service discipline in a redundancy model with exponential and 
independent replicas, should enable us to study the impact of the server selection id policy of the JIQ-Red($d$)
load balancing policy. 

We end this section by noting that  \cite{Harte} presents a mean field model for JIQ-Red($d$). However, this
model does not take the policy used by the dispatcher to select a server id into account. In fact if we think in terms
of the corresponding redundany model as discussed above, the model in \cite{Harte} corresponds to our
mean field model in Section \ref{sec:approx} if we add job abandonment.

\section{Pair approximations for $d>2$ replicas}\label{apx:d>2}
The pair approximations presented in this paper for PS, FCFS, LCFS and LPS with $d=2$ can
be generalized without much effort to a setting with $d>2$. The main downside of these generalizations
is that determining a fixed point becomes computationally expensive as the set of ODEs has
$d$ (for PS) or $2d$ (for FCFS, LCFS and LPS) dimensions.
For the triplet approximation such a generalization seems less obvious. In this section, we construct the generalization of the pair approximation with PS servers and provide a numerical evaluation of this approximation.

\subsection{Construction of the ODE}

When the servers use PS we make use of the variables $\pi_t^{(d)}(x_1,\ldots,x_d)$,
where the idea is that the $i$-th replica of a server is in a queue with length $x_i$.
These variables are symmetric as in the case with $d=2$.
We can then generalize \eqref{eq:pi(x)} to
\begin{align*}
    \pi_t^{(d)}(x_1)=\sum_{x_2,\ldots,x_d}\pi_t^{(d)}(x_1,\ldots,x_d),
\end{align*}
and \eqref{eq:condpi} to
\begin{align*}
    \pi_t^{(d)}(x_2|x_1)=\frac{1}{x_t^{(d)}(x_1)} \sum_{x_3,\ldots,x_d}\pi_t^{(d)}(x_1,x_2\ldots,x_d).
\end{align*}
The queue length can then be expressed by generalizing \eqref{eq:q_from_pi}:
\begin{align*}
    q_t^{(d)}(x)=\frac{d\pi_t^{(d)}(x)}{x}.
\end{align*}
Let $\vec x=(x_1,\ldots,x_d)$ and $e_i$ be equal to the $i$-th row of the order $d$ unity
matrix. Using these notations and the above quantities the set of ODEs given by \eqref{eq:ODEPS} becomes
\begin{align*}
      \frac{d \pi_t^{(d)}(\vec x)}{dt} &= \lambda \prod_{i=1}^d q_t^{(d)}(x_i-1) 
    +d\lambda \left[\sum_{i=1}^d \pi_t^{(d)}(\vec x - e_i)-d\pi_t^{(d)}(\vec x)\right] 
    \nonumber \\ 
    &\quad+ \sum_{i=1}^d \pi_t^{(d)}(\vec x + e_i)
    \left[ h_t^{(d)}(x_i+1)+ \frac{x_i}{x_i+1}\right]  
    - \pi_t(\vec x)\left[d+\sum_{i=1}^d  h_t(x_i)\right],
\end{align*}
with the convention that $\pi_t^{(d)}(\vec x)=0$ if one of the entries of 
$\vec x$ equals zero and where $h_t^{(d)}(x)$ is given by
\begin{align*}
     h_t^{(d)}(x) = (x-1)(d-1)\sum_{y'\geq 1} \pi_t^{(d)}(y'|x) \frac{1}{y'}.
\end{align*}

\subsection{Numerical evaluation}

\begin{table}[t]
    \centering
    \begin{tabular}{|r|c|c|c|c|c|c|c|c|}
    \hline 
        Number of jobs & mean & 1 & 2 & 3 & 5 & 10\\ \hline
         \hline
Simul $n=10^2$&          1.9374&   2.5592e-01  & 2.2734e-01 &  1.5648e-01 &  4.2942e-02 &  2.2749e-04\\
$n=10^3$&1.9072  & 2.5867e-01  & 2.3084e-01   &1.5766e-01  & 4.0735e-02  & 1.3987e-04\\
 $n=10^4$&   1.9046 &  2.5891e-01 &  2.3117e-01   &1.5780e-01&   4.0532e-02  & 1.3236e-04\\
  $n=10^5$&  1.9033  & 2.5905e-01 &  2.3124e-01   &1.5774e-01&   4.0449e-02  & 1.3136e-04\\ \hline
  Pair Approx. & 1.9033 &  2.5904e-01 &  2.3125e-01 &  1.5776e-01 &  4.0448e-02 &  1.3110e-04 \\ \hline
 Mean Field &  1.8947  & 2.6024e-01  & 2.3227e-01   &1.5776e-01  & 3.9746e-02 &  1.1981e-04\\
           \hline
    \end{tabular}
    \caption{Accuracy of Mean Field and Pair Approximation for PS servers, arrival rate $\lambda=0.8$ and $d=3$. }
    \label{tab:simulPAIR08PSd3}
\end{table}

Similarly to the case $d=2$, we used a numerical intergration of the ODE to compute the fixed point of the above ODE for the case $\lambda=0.9$ and $d=3$. We truncated the queue length to $30$ with gives an ODE of $30^3$ variables (instead of $30^2$ for $d=2$), which allows for a fast computation of the fixed point. In Table~\ref{tab:simulPAIR08PSd3}, we compare the values obtained by our pair approximation with values obtain by simulation and with the mean field approximation. This table illustrates that, as for $d=2$, the pair approximation for PS is very accurate.

\color{black}

\section{More on the Pair Approximation for PS servers}\label{apx:morePS}
\subsection{Analysis of $q(0)$}

From \eqref{eq:ODEPS}, we can derive the evolution of $\pi_t(x)$, by summing over $y\ge1$. This yields (for $x\ge1$)
\begin{align}\label{eq:sumy}
    \frac{d \pi_t(x)}{dt} &= \lambda q_t(x-1)  + 2\lambda(\pi_t(x-1)-\pi_t(x)) + \pi_t(x+1)\left[h_t(x+1)+\frac{x}{x+1}\right] \nonumber\\
    &\qquad+ \sum_{y\ge1} \pi_t(x,y+1)\frac{y}{y+1} - \pi_t(x)(2+h_t(x)),
\end{align}
where in the first equation we simplify the telescopic sum $\sum_{y\ge1}\pi_t(x,y+1)h_t(y+1) - \pi_t(x,y)h_t(y)=-\pi_t(x,1)h_t(1)=0$ because $h_t(1)=0$ by definition  \eqref{eq:h-pair}. 

If we further sum \eqref{eq:sumy} over $x\ge1$, the terms $\sum_{x\ge1}\pi_t(x+1)h(x+1)-\pi_t(x)h_t(x)=0$. This shows that
\begin{align}\label{eq:sumxy}
    \sum_{x\ge1} \frac{d \pi_t(x)}{dt} &= \lambda  + \sum_{x\ge1} \pi_t(x+1)  \frac{x}{x+1} + \sum_{y\ge1} \pi_t(y+1)\frac{y}{y+1} - 2\sum_{x\ge1} \pi_t(x) \\
    &= \lambda - 2 \sum_{x\ge1} \frac{\pi_t(x)}{x}\nonumber\\
    &= \lambda - (1-q_t(0)),
\end{align}
where the last line is due to \eqref{eq:q_from_pi}. 

Let $\bar q_t$ be the mean queue length at time $t$, then $\bar q_t = \sum_x x q_t(x) = 2\sum_{x}\pi_t(x)$ due to \eqref{eq:q_from_pi} and the above ODE~\eqref{eq:sumxy} can be stated as
\begin{align}\label{eq:ODEbarqt}
     \frac{d \bar q_t}{dt}  = 2\lambda - 2(1-q_t(0)),
\end{align}
Hence, for any fixed point we have $q(0)=2\sum_{x,y} \pi(x,y) = 1-\lambda$. This is consistent with the mean field approximation (and intuition) as the fraction of empty servers is expected to be $1-\lambda$.

\subsection{Recovering the mean field model from the pair approximation}

We noted earlier on that the pair approximation is more accurate than the mean field approximation.  Here, we show that the two approximation are closely linked. The ODE for \eqref{eq:sumy} is written for the quantities $\pi_t(x)=x q_t(x)/2$ while our original mean field approximation is written for the quantities $q_t(x)$. Multiplying the term \eqref{eq:sumy} by $x/2$, the pair approximation ODE can be rewritten for $q_t$ as follows:
\begin{align*}
    \frac{d q_t(x)}{dt} &= 2\lambda (q_t(x-1)-q_t(x)) + \frac{q_t(x+1) (x+1)}{x}\left[h_t(x+1)+\frac{x}{x+1}\right] \nonumber\\
    &\qquad+ \sum_{y\ge1} \frac2x \pi_t(x,y+1)\frac{y}{y+1} - q_t(x)(2+h_t(x)),\\
    &= 2\lambda (q_t(x-1)-q_t(x)) + q_t(x+1)-q_t(x) + q_t(x+1)(x+1)g_t(x) - q_t(x)x g_t(x),
\end{align*}
where $g_t(x)=\sum_{y' \geq 1} \pi_t(y'|x) \frac{1}{y'}$, which is equal to $h_t(x)/(x-1)$ except for $x=1$.

This equation is almost the same as the mean field equation \eqref{eq:indepq}, except for the last two terms.
Here, we show that we recover the mean field approximation if we make the approximation that $\pi_t(y|x)$ does not depend on $x$, that is,
if $\pi_t(y|x) \approx \pi_t(y)/\sum_z \pi_t(z)$. 
Indeed, under such an approximation, we obtain that 
\begin{align*}
    g_t(x) = \sum_{y' \geq 1} \pi_t(y'|x)\frac{1}{y'} \approx \sum_{y'\ge1} \frac{2\pi_t(y')}{y'\bar{q}_t} = \frac{1}{\bar{q}_t}\sum_{y'\ge1} q_t(y')=\frac{1-q_t(0)}{\bar{q}_t}.
\end{align*}

\section{Recovering the pair approximation from the triplet approximation}
\label{apx:pair from triplet}

Here, we show that one can recover the pair approximation from the triplet approximation when assuming an extra independence assumption (Equation~\eqref{eq:PA} below).

To show that, we introduce two new quantities $k(y,x)=\sum_v c(v|y,x) \frac{x-1}{v}=\frac{x-1}{c(y,x)}\sum_{v} \frac{c(y,x,v)}{v}$ and $\ell(x,y,z)=\sum_v c(v|x,y,z) \frac{y-2}{v}$. By using these notations, the triplet ODE~\eqref{eq:ODE_triplet} can be writen as
\begin{align*} 
    \frac{d}{dt} c(x,y,z) = ~&\lambda q(x-1)q(y-1)c(z|y-1)(y-1) + \lambda q(y-1)q(z-1)c(x|y-1) (y-1) \nonumber\\
    & -6\lambda c(x,y,z) + 2\lambda c(x-1,y,z) + 2\lambda c(x,y-1,z) +2\lambda c(x,y,z-1)\nonumber \\
    &-3 c(x,y,z) + c(x+1,y,z)\frac{x}{x+1}+ c(x,y+1,z)\frac{y-1}{y+1} + c(x,y,z+1)\frac{z}{z+1} \nonumber\\
    &-c(x,y,z) \left(k(y,x)  + k(y,z) + \ell(x,y,z) \right)\nonumber \\
    &+c(x+1,y,z) k(y, x+1) +c(x,y,z+1) k(y, z+1) +c(x,y+1,z) \ell(x,y+1,z),
\end{align*}
by just replacing the corresponding quantities by $k$ and $\ell$. 

Recall that $\pi(x,y)=c(x,y)/(y-1) = \sum_z c(x,y,z)/(y-1)$ . Hence, to obtain an ODE for $\pi$, we sum the above equation over $z$, which shows that (for $y\ge2$):
\begin{align*}
    \frac{d}{dt} c(x,y) = ~&\lambda q(x-1)q(y-1)(y-1) + \lambda q(y-1)c(x|y-1) (y-1) \nonumber\\
    & -6\lambda c(x,y) + 2\lambda c(x-1,y) + 2\lambda c(x,y-1) + 2\lambda c(x,y) \nonumber \\
    &-3 c(x,y) + c(x+1,y)\frac{x}{x+1}+  c(x,y+1)\frac{y-1}{y+1} + \sum_z c(x,y,z+1)\frac{z}{z+1} \nonumber\\
    &-c(x,y) k(y,x)  -\sum_z c(x,y,z) k(y,z) -\sum_z c(x,y,z) \ell(x,y,z) \nonumber \\
    &+c(x+1,y) k(y, x+1) + \sum_z c(x,y,z+1) k(y, z+1) + \sum_z c(x,y+1,z) \ell(x,y+1,z)\\
\end{align*}
There are a few simplification to the above equation: 
\begin{itemize}
    \item The last term of the second line removes $2\lambda$ of the $6\lambda$ of the first term.
    \item On the third line, the third term is equal to $c(x,y+1)\frac{y-1}{y+1}=\frac{y-1}{y}c(x,y+1)\frac{y}{y+1}=(y-1)\pi(x,y+1)\frac{y}{y+1}$ and the last term is equal to $\sum_z c(x,y,z+1)\frac{z}{z+1}=c(x,y) - \sum_{z\ge1}c(x,y,z)\frac{1}{z}$ because the sum can starts at $z=0$ (as $c(.,.,0)=0$). 
    \item The last term of the fourth cancels with the second term of the fifth line because, as before, the sum can starts at $z=0$ or $z=1$ because as  $c(.,.,0)=0$.
\end{itemize}

By using that $\pi(x,y)=c(x,y)/(y-1)$, this implies that
\begin{align}
    \label{eq:triplet_pi(x,y)}
    \frac{d}{dt} \pi(x,y) = ~&\lambda q(x-1)q(y-1) + \lambda q(y-1)c(x|y-1) - \underbrace{\frac{2\lambda}{y-1}\pi(x,y-1)}_{\text{(*) moved from second line. Cancels with previous term.}}\nonumber\\
    & -4\lambda \pi(x,y) + 2\lambda \pi(x-1,y) + 2\lambda \pi(x,y-1)  \nonumber \\
    &-2 \pi(x,y) + \pi(x+1,y)\frac{x}{x+1}+  \pi(x,y+1)\frac{y}{y+1} - \frac{1}{y-1} \sum_z c(x,y,z)\frac{1}{z} \nonumber\\
    &- \frac{1}{y-1} c(x,y) k(y,x)   - \frac{1}{y-1} \sum_z c(x,y,z) \ell(x,y,z) \nonumber \\
    &+\frac{1}{y-1} c(x+1,y) k(y, x+1) + \frac{1}{y-1} \sum_z c(x,y+1,z) \ell(x,y+1,z)
\end{align}
where the term (*) comes from the third term of the second line that can be rewritten as $2\lambda$ times $c(x,y-1)=(y-2)\pi(x,y-1)=(y-1)\pi(x,y-1)-\pi(x,y-1)$.

By reorganizing the terms, the above equation rewrites as:
\begin{align*}
    \frac{d}{dt} \pi(x,y) =~&
    \lambda q(x-1)q(y-1) + 2\lambda [\pi(x-1,y) +  \pi(x,y-1)-2 \pi(x,y)]  \nonumber \\
    &+ \pi(x+1,y)[\frac{x}{x+1} + k(y, x+1)] + \pi(x,y+1)\frac{y}{y+1} \\
    &-\pi(x,y) [2+k(y,x) ]\nonumber\\
    &+ \frac{1}{y-1} \sum_z c(x,y+1,z) \ell(x,y+1,z)- \frac{1}{y-1} \sum_z c(x,y,z)\frac{1}{z}  - \frac{1}{y-1} \sum_z c(x,y,z) \ell(x,y,z) 
\end{align*} 
Let $m(x,y)=\frac{1}{y-2} \sum_z c(x,y,z) \ell(x,y,z)/\pi(x,y)$ for $y\ge2$ and $m(x,2)=0$. The last line of the above equation is equal to \begin{align*}
    &\frac1{y-1}\pi(x,y)\sum_z \frac{c(x,y,z)}{z \pi(x,y)} - \frac{y-2}{y-1}\frac{1}{y-2}\sum_z c(x,y,z) \ell(x,y,z)\\
    &= -\frac1{y-1}\pi(x,y)k(x,y) - \pi(x,y)m(x,y) + \frac1{y-1}\pi(x,y)m(x,y).
\end{align*}
Hence, using that $m(x,y+1)=\frac{1}{y-1} \sum_z c(x,y+1,z) \ell(x,y+1,z)/\pi(x,y+1)$, we can rewrite the total sum as
\begin{align}
    \label{eq:pair_from_triplet}
    \frac{d}{dt} \pi(x,y) =~&
    \lambda q(x-1)q(y-1) + 2\lambda [\pi(x-1,y) + \pi(x,y-1)-2 \pi(x,y)]  \nonumber \\
    &+ \pi(x+1,y)\left[k(y, x+1) + \frac{x}{x+1}\right] + \pi(x,y+1)\left[m(x,y+1) + \frac{y}{y+1}\right] \nonumber\\
    &-\pi(x,y) [2+k(y,x) +m(x,y)]\nonumber\\
    &+ \frac{\pi(x,y)}{y-1}\left( m(x,y) - k(x,y)\right),
\end{align} 
The ODE~\eqref{eq:pair_from_triplet} is close to the pair approximation equation but with two modifications:
\begin{itemize}
    \item The quantity $h(x)$ of the pair approximation is replaced by $k(y,x)$ and the quantity $h(y)$ is replaced by $m(x,y+1)$.
    \item the last line does not exist for the pair approximation.
\end{itemize}
The pair approximation consists in assuming that when three servers are connected, their state only depends on the middle node (note that this only make sense when $y\ge2$).  Written in terms of $c_t(x,y,z)$, this means that for $y\ge2$ we would have:
\begin{align}
    \label{eq:PA}
    c_t(x,y,z) \approx c_t(x,y) c_t(z | y) = (y-1) \pi_t(x,y) \frac{\pi_t(y,z)}{\pi_t(y)},
\end{align}
where by construction $c_t(z|y) = \pi_t(z|y)= \pi_t(y,z)/\pi_t(y)$, see \eqref{eq:condpi}.

A straightforward computation shows that if one assumes \eqref{eq:PA}, then one has $m(x,y)=k(x,y)=h(y)$. In such a case, the ODE~\eqref{eq:pair_from_triplet} corresponds exactly to the triplet approximation.

\section{More on the Pair Approximation for FCFS servers}\label{apx:moreFCFS}
\subsection{Analysis of $q(0)$}

Summing \eqref{eq:ODEFCFS} $y_1$ and $y_2$ and simplifying yields
\begin{align}\label{eq:sumyFCFS}
    \frac{d \delta_t(x_1,x_2)}{dt} &= \lambda q_t(x_2-1) 1[x_1=x_2]
    - 2\lambda \delta_t(x_1,x_2) +
    2\lambda\delta_t(x_1,x_2-1) \nonumber\\
    &-\delta_t(x_1,x_2) + \delta_t(x_1+1,x_2+1)-\delta_t(x_1,1,x_2) \nonumber \\ 
    &-\delta_t(x_1,x_2) \left[\kappa_t(x_2,x_2) - \frac{\delta_t(x_1,1,x_2)}{\delta_t(x_1,x_2)}
    \right]
    \nonumber \\ 
    &+\delta_t(x_1+1,x_2+1) \kappa_t(x_1,x_2+1)  
    +\delta_t(x_1,x_2+1) 
    \left[\kappa_t(x_2+1,x_2+1) -\kappa_t(x_1,x_2+1) \right].
    \end{align}

If we further sum \eqref{eq:sumyFCFS} over $x_1$ and $x_2$, we find
\begin{align}\label{eq:sumxyFCFS}
    \sum_{x_1,x_2} \frac{d \delta_t(x_1,x_2)}{dt} &= 
    \lambda - 2\delta_t^{(pos)}(1,x_2)= \lambda -(1-q_t(0)),
\end{align}
where the second equality is found by summing  \eqref{eq:qxFCFS} over $x$. 

Let $\bar q_t$ be the mean queue length at time $t$, then $\bar q_t = \sum_{x} x q_t(x) = 2\sum_{x_1,x}\delta_t(x_1,x)$ due to \eqref{eq:qxFCFS2}
and the above ODE can be stated as 
\begin{align}\label{eq:ODEbarqtFCFS}
     \frac{d \bar q_t}{dt}  = 2\lambda - 2(1-q_t(0)),
\end{align}
similar to the PS case.
Hence, for any fixed point we have $q(0)= 1-\lambda$.

\subsection{Recovering the mean field model from the pair approximation}
The independence assumption that we used in Section~\ref{sec:approx} to construct the mean field approximation consists in saying that the position of two replicas are independent, that is
\[ \delta_t(x_1,y_1,x_2,y_2)\approx\frac{\delta_t(x_1,x_2)\delta_t(y_1,y_2)}{\sum_{z_1,z_2}\delta_t(z_1,z_2)} = \frac{2\delta_t(x_1,x_2)\delta_t(y_1,y_2)}{\bar q_t}, \]
where $\bar q_t = \sum_{x} x q_t(x) = 2\sum_{x_1,x_2}\delta_t(x_1,x_2)$ due to \eqref{eq:qxFCFS2} is the mean queue length. The independence assumption also implies that
$\kappa_t(x_a,x_b) \approx 2 x_a \delta_t^{(pos)}(1)/\bar q_t$. Plugging this into \eqref{eq:sumyFCFS} and  summing over $x_1$ yields
\begin{align}
    \frac{d \delta_t^{(ql)}(x_2)}{dt}& \approx
    \lambda q_t(x_2-1) -2 \lambda (\delta_t^{(ql)}(x_2)-\delta_t^{(ql)}(x_2-1))
    - \delta_t^{(ql)}(x_2)  \nonumber \\
    &+\sum_{x_1} \delta_t(x_1+1,x_2+1) - \delta_t^{(ql)}(x_2) \frac{2\delta_t^{(pos)}(1)}{\bar q_t}
     - \delta_t^{(ql)}(x_2) (x_2-1) \frac{2\delta_t^{(pos)}(1)}{\bar q_t}
    \nonumber \\
    &+\sum_{x_1} \delta_t(x_1+1,x_2+1)x_1  \frac{2\delta_t^{(pos)}(1)}{\bar q_t}
    +\sum_{x_1} \delta_t(x_1,x_2+1)(x_2-x_1+1)  \frac{2\delta_t^{(pos)}(1)}{\bar q_t} \nonumber \\
    &= \lambda q_t(x_2-1) -2 \lambda (\delta_t^{(ql)}(x_2)-\delta_t^{(ql)}(x_2-1))
    - \delta_t^{(ql)}(x_2)  \nonumber \\
    &+ \delta_t^{(ql)}(x_2+1)-\delta_t(1,x_2+1) -
    \frac{2\delta_t^{(pos)}(1)}{\bar q_t} \left( \delta_t^{(ql)}(x_2) x_2 -
    \delta_t^{(ql)}(x_2+1) x_2 \right).
\end{align}
Using \eqref{eq:qxFCFS2} we can transform this into a set of ODEs for $q_t(x)$ for $x > 0$
by noting that $q_t(x)=2\delta_t(1,x)$ and $2\delta_t^{(pos)}(1)=1-q_t(0)$:
\begin{align}\label{eq:indepqFCFS}
    \frac{d q_t(x)}{dt} &= q_t(x+1) - q_t(x) +2\lambda \left[q_t(x-1)-q_t(x)\right] 
    +  (1-q_t(0)) \frac{((x+1)q_t(x+1)-xq_t(x)) }{\bar q_t},
\end{align}
while $q_t(0)=1-\sum_{x\geq 1} q_t(x)$, which is the same as the mean field approximation \eqref{eq:indepq}.

This shows that, as for PS, the mean field ODE can be recovered from the pair approximation for FCFS if we assume independence. Note, however, that solving numerically \eqref{eq:indepq}, \eqref{eq:ODEPS} and \eqref{eq:ODEFCFS} do not lead to the same fixed point. This implies that without the independent assumption, the PS and FCFS models do not yield the same performance.

\section{More Model Validation Numbers}
\label{apx:more-numerical}

This section presents additional numerical results to assess the accuracy of the various approximations as a function of the number of servers. The tables presented here are the same as the ones presented in the main body of the paper but with other parameters values.  In more details, the tables presented here are:
\begin{itemize}
    \item Table~\ref{tab:simulPAIR05} and Table~\ref{tab:simulPAIR07} complement Table~\ref{tab:simulPAIR09} and validate the accuracy of the pair and triplet approximations for $\lambda=0.5$ and $\lambda=0.7$.
    \item Table~\ref{tab:simulPAIR05FCFS} and Table~\ref{tab:simulPAIR07FCFS} complement Table~\ref{tab:simulPAIR09FCFS} and validate the accuracy of the pair approximation for FCFS for $\lambda=0.5$ and $\lambda=0.7$.
    \item Table~\ref{tab:simulPAIR05LCFS} and Table~\ref{tab:simulPAIR07LCFS} do the same for LCFS.
    \item Table~\ref{tab:simulPAIR05LPS_K2} and Table~\ref{tab:simulPAIR07LPS_K2} do the same for LPS(2).
\end{itemize}
Finally, Figure~\ref{fig:non-asympto-exact2} illustrates how our various approximation can serve to estimate the average queue length for FCFS, LCFS and LPS. It complements Figure~\ref{fig:non-asympto-exact} that focuses only on PS. This figure shows that the accuracy of the pair approximation for LPS and FCFS are very good even if not as precise as the one for PS. The accuracy of the pair approximation for LCFS is somehow less good because of the strong correlations between replicas. \red{For FCFS, we also plot the numbers from \cite{gardner2017redundancy} that are shown to be asymptotically exact in \cite{shneer2020large}. This graph shows that our approximation for FCFS is very close to the one of \cite{gardner2017redundancy} while being different. }

\begin{table}[ht]
    \centering
    \begin{tabular}{|r|c|c|c|c|c|c|c|c|}
    \hline 
        Number of jobs & mean & 1 & 2 & 3 & 5 & 10\\ \hline
\hline
Simul $n=10^2$&7.8368e-01  & 3.0292e-01 &  1.3327e-01 &  4.6290e-02  & 3.2962e-03 &  6.1147e-07\\
 $n=10^3$&   7.8181e-01 &  3.0353e-01  & 1.3336e-01&   4.6040e-02  & 3.1801e-03&   4.3798e-07\\
 $n=10^4$&   7.8145e-01 &  3.0358e-01  & 1.3335e-01&   4.5998e-02  & 3.1620e-03&   4.2604e-07\\
 $n=10^5$&   7.8145e-01 &  3.0358e-01  & 1.3335e-01 &  4.5998e-02  & 3.1620e-03&   4.2604e-07\\ \hline
Triplet App. &7.8149e-01  & 3.0358e-01 &  1.3336e-01  & 4.6003e-02  & 3.1618e-03  & 4.3615e-07 \\\hline
Pair Approx.&   7.8142e-01 &  3.0361e-01  & 1.3336e-01 &  4.5992e-02  & 3.1590e-03&   4.3449e-07\\\hline
Mean Field&   7.7891e-01 &  3.0452e-01  & 1.3334e-01 &  4.5573e-02  & 3.0344e-03&   3.6906e-07\\         \hline
    \end{tabular}
    \caption{Accuracy of Mean Field, Pair and Triplet Approximation for PS servers, arrival rate $\lambda=0.5$. }
    \label{tab:simulPAIR05}
\end{table}

\begin{table}[ht]
    \centering
    \begin{tabular}{|r|c|c|c|c|c|c|c|c|}
    \hline 
        Number of jobs & mean & 1 & 2 & 3 & 5 & 10\\ \hline
        \hline
Simul $n=10^2$ &1.4924  & 2.8191e-01  & 2.0210e-01  & 1.1798e-01  & 2.5147e-02  & 8.2669e-05\\
 $n=10^3$ &   1.4851 & 2.8275e-01 &  2.0301e-01 &  1.1816e-01  & 2.4572e-02 &  6.5377e-05\\
 $n=10^4$ &   1.4845 &  2.8283e-01 &  2.0310e-01&   1.1819e-01 &  2.4514e-02 &  6.3963e-05\\
 $n=10^5$ &   1.4845 &  2.8283e-01 &  2.0310e-01&   1.1819e-01 &  2.4514e-02 &  6.3963e-05\\\hline
Triplet App. & 1.4845 &  2.8279e-01  & 2.0312e-01  & 1.1821e-01   &2.4512e-02   &6.3623e-05\\\hline
Pair Approx. &   1.4841 &  2.8288e-01 &  2.0315e-01&   1.1820e-01 &  2.4486e-02 &  6.3272e-05\\ \hline
Mean Field &   1.4730 &  2.8471e-01 &  2.0436e-01&   1.1795e-01 &  2.3606e-02 &  5.2281e-05 \\ 
         \hline
    \end{tabular}
    \caption{Accuracy of Mean Field, Pair and Triplet Approximation for PS servers, arrival rate $\lambda=0.7$. }
    \label{tab:simulPAIR07}
\end{table}

\begin{table}[ht]
    \centering
    \begin{tabular}{|r|c|c|c|c|c|c|c|c|}
    \hline 
        Number of jobs & mean & 1 & 2 & 3 & 5 & 10\\ \hline
         \hline
Simul $n=10^2$ &       7.7493e-01   &3.0609e-01  & 1.3318e-01&   4.4843e-02 &  2.8666e-03  & 2.9192e-07\\
$n=10^3$ &  7.7275e-01  & 3.0677e-01   &1.3324e-01&   4.4506e-02  & 2.7435e-03  & 2.7577e-07\\
$n=10^4$ &  7.7258e-01  & 3.0684e-01   &1.3324e-01&   4.4492e-02  & 2.7321e-03  & 2.3708e-07\\
$n=10^5$ &  7.7258e-01  & 3.0684e-01   &1.3324e-01&   4.4492e-02  & 2.7321e-03 &  2.3708e-07\\\hline
\red{2$\lambda E[T]$ \cite{gardner2017redundancy}} & \red{7.7259e-01}  &  -& - &-   &-  & -  \\\hline
Pair Approx. & 7.7233e-01 &  3.0694e-01 &  1.3326e-01 &  4.4449e-02  & 2.7170e-03  & 2.3489e-07\\\hline
Mean Field &  7.7891e-01  & 3.0452e-01   &1.3334e-01&   4.5573e-02  & 3.0344e-03 &  3.6906e-07\\
 \hline
    \end{tabular}
    \caption{Accuracy of Mean Field and Pair Approximation for FCFS servers, arrival rate $\lambda=0.5$. }
    \label{tab:simulPAIR05FCFS}
\end{table}

\begin{table}[ht]
    \centering
    \begin{tabular}{|r|c|c|c|c|c|c|c|c|}
    \hline
          Number of jobs & mean & 1 & 2 & 3 & 5 & 10\\ \hline
         \hline
Simul $n=10^2$  &  1.4491 &  2.8916e-01&   2.0698e-01&   1.1698e-01 &  2.1677e-02 &  3.6800e-05\\
 $n=10^3$ & 1.4409 &  2.9029e-01 &  2.0796e-01 &  1.1696e-01  & 2.0995e-02  & 2.9014e-05\\
  $n=10^4$& 1.4400 &  2.9040e-01 &  2.0805e-01 &  1.1696e-01  & 2.0923e-02  & 2.8286e-05\\
  $n=10^5$& 1.4400 &  2.9040e-01 &  2.0805e-01 &  1.1696e-01  & 2.0923e-02  & 2.8286e-05\\\hline
\red{2$\lambda E[T]$ \cite{gardner2017redundancy}} & \red{1.4399}  &  -& - &-   &-  & -  \\\hline
 Pair Approx. & 1.4379 &  2.9068e-01 &  2.0834e-01 &  1.1697e-01  & 2.0736e-02  & 2.6538e-05\\\hline
Mean Field  & 1.4730 &  2.8471e-01 &  2.0436e-01 &  1.1795e-01  & 2.3606e-02  & 5.2281e-05\\
    \hline 
    \end{tabular}
    \caption{Accuracy of Mean Field and Pair Approximation for FCFS servers, arrival rate $\lambda=0.7$. }
    \label{tab:simulPAIR07FCFS}
\end{table}

\begin{table}[ht]
    \centering
    \begin{tabular}{|r|c|c|c|c|c|c|c|c|}
    \hline 
        Number of jobs & mean & 1 & 2 & 3 & 5 & 10\\ \hline
         \hline
 Simul $n=10^2$&  7.9424e-01&   2.9940e-01 &  1.3326e-01&   4.7894e-02&   3.8482e-03&   1.0922e-06\\
    $n=10^3$&7.9185e-01&   2.9993e-01 &  1.3330e-01&   4.7638e-02&   3.7131e-03&   8.2594e-07\\
    $n=10^4$&7.9182e-01&   2.9998e-01 &  1.3334e-01&   4.7626e-02&   3.7045e-03&   8.2407e-07\\
   $n=10^5$& 7.9182e-01&   2.9998e-01 &  1.3334e-01&   4.7626e-02&   3.7045e-03&   8.2407e-07\\\hline
  Pair Approx. & 7.8945e-01&   3.0072e-01 &  1.3343e-01&   4.7300e-02&   3.5621e-03&   6.9569e-07\\\hline
  Mean Field & 7.7891e-01&   3.0452e-01 &  1.3334e-01&   4.5573e-02&   3.0344e-03&   3.6906e-07\\
 \hline
    \end{tabular}
    \caption{Accuracy of Mean Field and Pair Approximation for LCFS servers, arrival rate $\lambda=0.5$. }
    \label{tab:simulPAIR05LCFS}
\end{table}

\begin{table}[ht]
    \centering
    \begin{tabular}{|r|c|c|c|c|c|c|c|c|}
    \hline
          Number of jobs & mean & 1 & 2 & 3 & 5 & 10\\ \hline
         \hline
Simul $n=10^2$&   1.5536&   2.7299e-01&   1.9575e-01 &  1.1849e-01 &  2.9617e-02  & 2.0176e-04\\
  $n=10^3$& 1.5453&   2.7375e-01&   1.9656e-01 &  1.1870e-01 &  2.9133e-02  & 1.6615e-04\\
  $n=10^4$& 1.5450&   2.7379e-01&   1.9664e-01 &  1.1874e-01 &  2.9099e-02  & 1.6391e-04\\
  $n=10^5$& 1.5450&   2.7379e-01&   1.9664e-01 &  1.1874e-01 &  2.9099e-02  & 1.6391e-04\\\hline
  Pair Approx. & 1.5259&   2.7606e-01&   1.9877e-01 &  1.1904e-01 &  2.7697e-02  & 1.1591e-04\\\hline
  Mean Field &  1.4730&   2.8471e-01&   2.0436e-01 &  1.1795e-01 &  2.3606e-02  & 5.2281e-05\\
    \hline 
    \end{tabular}
    \caption{Accuracy of Mean Field and Pair Approximation for LCFS servers, arrival rate $\lambda=0.7$. }
    \label{tab:simulPAIR07LCFS}
\end{table}

\begin{table}[ht]
    \centering
    \begin{tabular}{|r|c|c|c|c|c|c|c|c|}
    \hline 
        Number of jobs & mean & 1 & 2 & 3 & 5 & 10\\ \hline
         \hline
Simul $n=10^2$&         7.8140e-01 &  3.0386e-01  & 1.3328e-01 &  4.5913e-02   &3.1709e-03  & 4.6516e-07\\
  $n=10^3$ & 7.7910e-01&   3.0444e-01 &  1.3329e-01&   4.5606e-02 &  3.0502e-03&   4.1686e-07\\
  $n=10^4$ & 7.7900e-01&   3.0449e-01 &  1.3336e-01&   4.5592e-02 &  3.0376e-03&   3.6451e-07\\
  $n=10^5$ & 7.7900e-01&   3.0449e-01 &  1.3336e-01&   4.5592e-02 &  3.0376e-03&   3.6451e-07\\\hline
 Pair Approx.  &7.7895e-01&   3.0450e-01 &  1.3334e-01&   4.5582e-02 &  3.0352e-03&   3.6858e-07\\\hline
 Mean Field & 7.7891e-01&   3.0452e-01 &  1.3334e-01&   4.5573e-02 &  3.0344e-03&   3.6906e-07\\
         \hline
    \end{tabular}
    \caption{Accuracy of Mean Field and Pair Approximation for LPS(2) servers, arrival rate $\lambda=0.5$. }
    \label{tab:simulPAIR05LPS_K2}
\end{table}

\begin{table}[ht]
    \centering
    \begin{tabular}{|r|c|c|c|c|c|c|c|c|}
    \hline 
        Number of jobs & mean & 1 & 2 & 3 & 5 & 10\\ \hline
         \hline
Simul $n=10^2$ &1.4758e+00  & 2.8460e-01  & 2.0417e-01  & 1.1788e-01  & 2.3797e-02 &  5.6935e-05\\
  $n=10^3$ & 1.4664e+00 &  2.8575e-01&   2.0512e-01&   1.1784e-01&   2.3067e-02  & 4.5098e-05\\
 $n=10^4$ &  1.4654e+00 &  2.8587e-01&   2.0521e-01&   1.1781e-01&   2.2992e-02  & 4.5006e-05\\
  $n=10^5$ & 1.4654e+00 &  2.8587e-01&   2.0521e-01&   1.1781e-01&   2.2992e-02  & 4.5006e-05\\\hline
 Pair Approx. &  1.4656e+00 &  2.8585e-01&   2.0523e-01&   1.1785e-01&   2.2996e-02  & 4.4963e-05\\\hline
  Mean Field &  1.4730e+00 &  2.8471e-01&   2.0436e-01&   1.1795e-01&   2.3606e-02  & 5.2281e-05\\
         \hline
    \end{tabular}
    \caption{Accuracy of Mean Field and Pair Approximation for LPS(2) servers, arrival rate $\lambda=0.7$. }
    \label{tab:simulPAIR07LPS_K2}
\end{table}

\begin{figure}[ht]
    \centering
    \begin{tabular}{@{}c@{}c@{}c@{}}
        \includegraphics[width=0.32\linewidth]{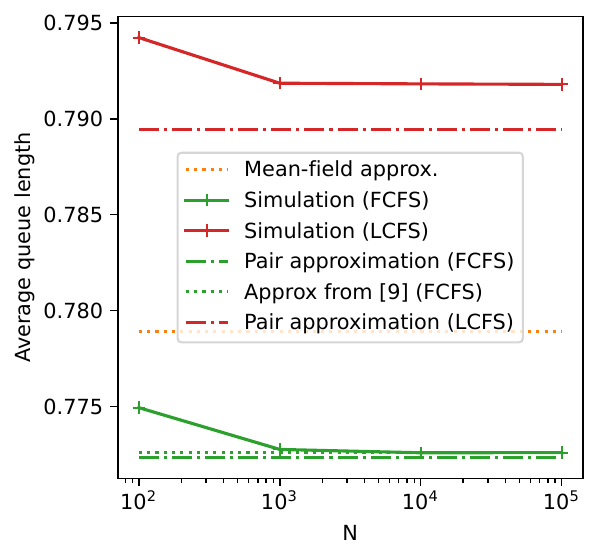}
        &\includegraphics[width=0.32\linewidth]{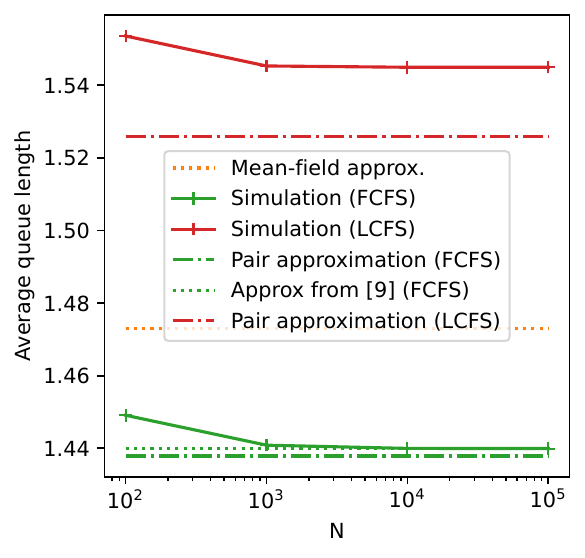}
        &\includegraphics[width=0.32\linewidth]{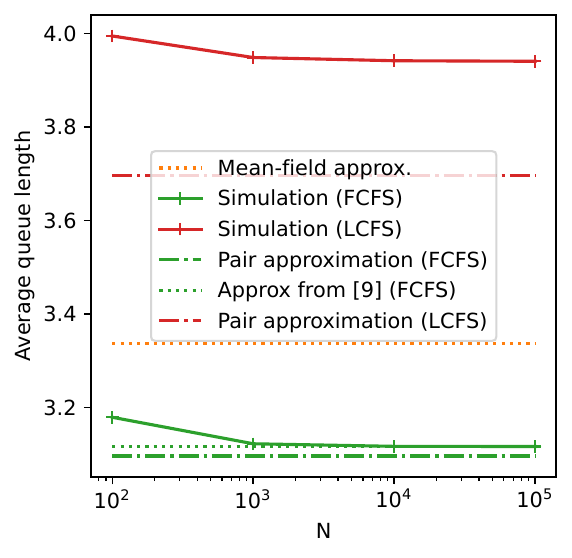}
        \\
        (a)  FCFS and LCFS, $\lambda=0.5$
        & (b) FCFS and LCFS, $\lambda=0.7$
        & (c) FCFS and LCFS, $\lambda=0.9$\\~\\
        \begin{tabular}{@{}c@{}}
            \includegraphics[width=0.32\linewidth]{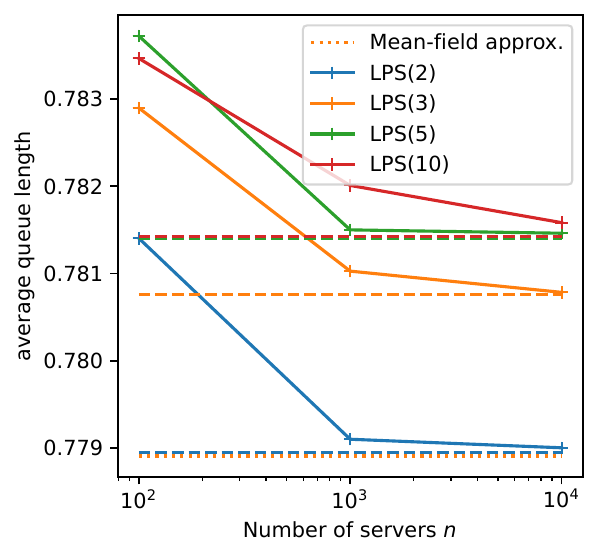}
        \end{tabular}
        &\begin{tabular}{@{}c@{}}
            \includegraphics[width=0.32\linewidth]{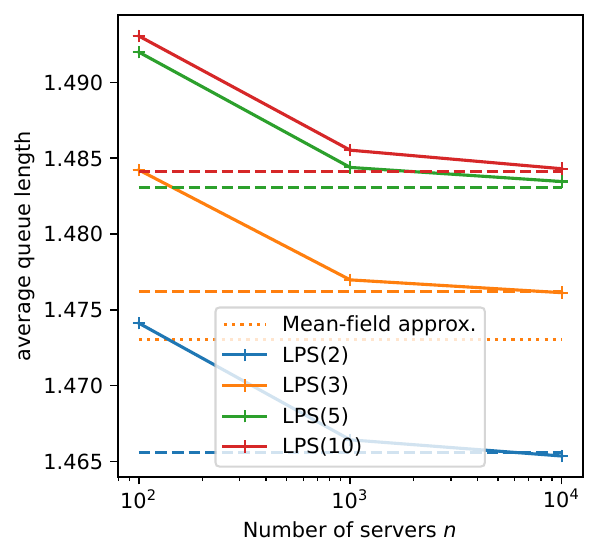}
        \end{tabular}
        &\begin{tabular}{@{}c@{}}
            \includegraphics[width=0.32\linewidth]{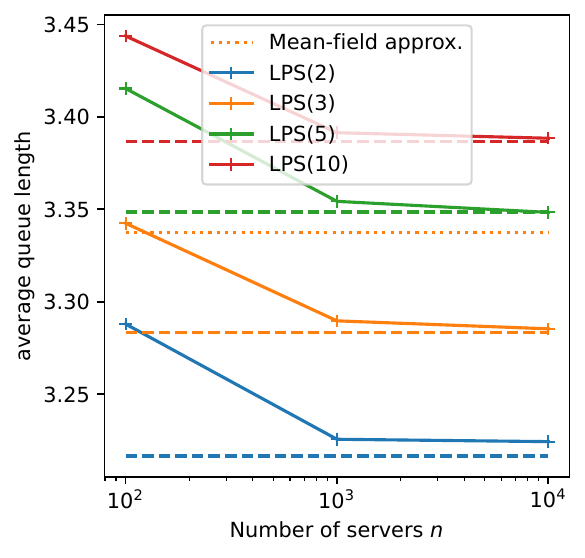}
        \end{tabular}
        \\
        (d) LPS, $\lambda=0.5$
        &(e) LPS, $\lambda=0.7$
        &(f) LPS, $\lambda=0.9$
    \end{tabular}

    \caption{Complement of Figure~\ref{fig:non-asympto-exact} with other values. Average queue length as a function of $n$: we compare the numbers obtained by simulation (for finite $n$) to the different approximations. Solid lines correspond to simulations whereas dashed lines correspond to pair approximation. We observe that all pair-approximations provide estimated queue lengths that are slightly optimistic compared to the values for finite $N$. This is not the case for the mean field approximation that can provide optimistic or pessimistic estimates depending on the scheduling policies or the load. 
    }
    \label{fig:non-asympto-exact2}
\end{figure}

\end{document}